\newcommand{\nop}[1]{}
\newcommand{\Paragraph}[1]{~\vspace*{-0.0\baselineskip}\\{\bf #1}}
\begin{document}
\begin{sloppypar}

\title{Efficient Execution of SPARQL Queries with OPTIONAL and UNION Expressions
}


\author{Lei Zou \and Yue Pang \and M. Tamer Özsu \and Jiaqi Chen}


\institute{Lei Zou \at
              Peking University, Beijing, China\\\email{zoulei@pku.edu.cn}
           \and
           Yue Pang \at Peking University, Beijing, China\\\email{michelle.py@pku.edu.cn}
           \and
           M. Tamer Özsu \at University of Waterloo, Waterloo, Canada\\\email{tamer.ozsu@uwaterloo.ca}
           \and
           Jiaqi Chen \at Peking University, Beijing, China\\\email{chenjiaqi93@pku.edu.cn}}

\date{Received: date / Accepted: date}

\maketitle

\begin{abstract}
The proliferation of RDF datasets has resulted in studies focusing on optimizing SPARQL query processing. Most existing work focuses on basic graph patterns (BGPs) and ignores other vital operators in SPARQL, such as \texttt{UNION} and \texttt{OPTIONAL}. SPARQL queries with these operators, which we abbreviate as SPARQL-UO, pose serious query plan generation challenges. In this paper, we propose techniques for executing SPARQL-UO queries using BGP execution as a building block, based on a novel \underline{B}GP-based \underline{E}valuation (BE)-Tree representation of query plans. On top of this, we propose a series of \emph{cost-driven BE-tree transformations} to generate more efficient plans by reducing the search space and intermediate result sizes, and a \emph{candidate pruning} technique that further enhances efficiency at query time. Experiments confirm that our method outperforms the state-of-the-art by orders of magnitude.
\keywords{Graph database \and SPARQL query optimization \and OPTIONAL expressions \and UNION expressions}
\end{abstract}

\vspace{-0.10in}
\section{Introduction}
The proliferation of knowledge graphs has generated many RDF (Resource Description Framework) data management problems. RDF is the de-facto data model for knowledge graphs, where each edge is a triple of $\langle$subject, predicate, object$\rangle$. SPARQL has been the focus of a significant body of research as the standard language for accessing RDF datasets. Most of the existing work focus on basic graph pattern (BGP) execution \cite{Neumann2009,DBLP:journals/pvldb/ZouMCOZ11,DBLP:journals/pvldb/YuanLWJZL13}, which is the basic building block of SPARQL. On the other hand, how to execute and optimize queries containing operators on graph patterns, such as \texttt{UNION} and \texttt{OPTIONAL}, has received much less attention.


\texttt{UNION} and \texttt{OPTIONAL} expressions are essential in SPARQL grammar. RDF is a semi-structured data model that does not enforce the underlying data to adhere to a predefined schema, which provides flexibility in integrating diverse sources of RDF data, but leads to challenges when issuing queries since the same information can be represented in many ways in RDF graphs. The \texttt{UNION} operator is crucial in this case since it groups diversely expressed information. For example, in DBpedia \cite{DBLP:journals/semweb/LehmannIJJKMHMK15}, an open-domain knowledge graph extracted from Wikipedia, persons' names are represented using the predicate $\langle$\texttt{foaf:name}$\rangle$ or $\langle$\texttt{rdfs:label}$\rangle$. Thus, to fully retrieve all the names of a group of persons (e.g., Presidents of the United States), it is necessary to use the \texttt{UNION} operator (Figure \ref{fig:bgpeupb}(a)).

\begin{figure*}
	\centering
	\includegraphics[scale=0.5]{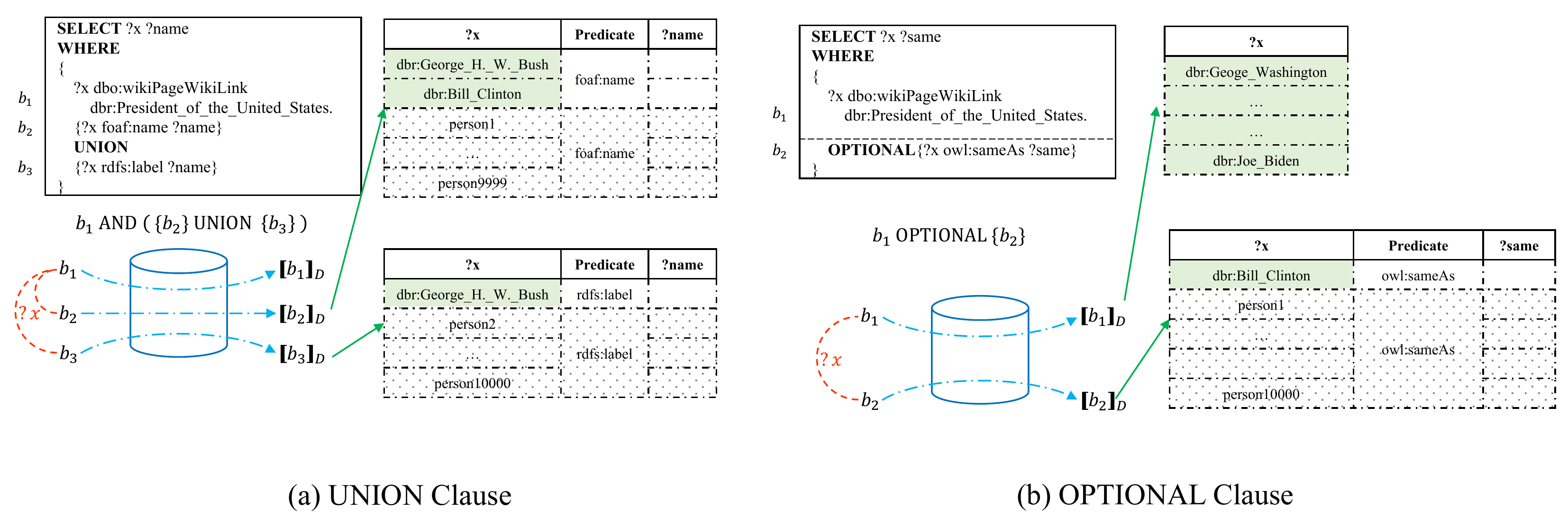}
	\vspace{-0.1in}
	\caption{An Example Query with a \texttt{UNION} and  \texttt{OPTIONAL} Clause}
	\label{fig:bgpeupb}
	\vspace{-0.1in}
\end{figure*}

In addition to the diversity of representation, incompleteness is another feature of RDF datasets. Specifically, an entity may lack some attributes or relationships of most other similar entities (which are most likely to be stored in the same table in a relational database). The {OPTIONAL} operator is crucial in this case since it allows attaching some attributes or relations as optional information. For example, the \texttt{OPTIONAL} query in Figure \ref{fig:bgpeupb}(b) fetches all the presidents of the United States, along with other references to them that are not on the same Wikipedia page (through the predicate \texttt{owl:sameAs}). Since not every president has multiple references in the database, the triple with the predicate \texttt{owl:sameAs} is enclosed in an \texttt{OPTIONAL} expression, so those presidents without alternative references are still retained in the results.

\nop{
\begin{figure*}
	\centering
	\includegraphics[scale=1.0]{figure/bgpeopb-new-new.png}
	\vspace{-0.1in}
	\caption{An Example Query with an \texttt{OPTIONAL} Clause}
	\label{fig:bgpeopb}
	\vspace{-0.1in}
\end{figure*}
}

\texttt{UNION} and \texttt{OPTIONAL} expressions are widely used in real-world SPARQL workloads and are part of the SPARQL 1.1 specification. Recent empirical studies \cite{bonifati2020analytical} show that \texttt{UNION} and \texttt{OPTIONAL} expressions occur in $25.10\%$ and $31.72\%$ of the valid queries from real SPARQL query logs across a diverse range of endpoints, respectively. In this paper, we address the efficient execution of SPARQL queries with \texttt{UNION} and \texttt{OPTIONAL} expressions, which we abbreviate as SPARQL-UO queries.
\Paragraph{Our Solution.} Since BGP has been well studied, it is desirable to build SPARQL-UO query optimization on a well-performing BGP engine. Therefore, we first propose a BGP-based query evaluation scheme. Specifically, we propose a \emph{B}GP-based \emph{E}valuation (BE)-tree (Definition \ref{def:betree}) for the evaluation plan of SPARQL-UO queries. However, if the BE-tree is evaluated as it is, some BGPs may generate large intermediate results. Therefore, we propose a BE-tree transformation method to generate a more efficient query plan.

We introduce two types of transformations, \emph{merge} and \emph{inject}, that target \texttt{UNION} and \texttt{OPTIONAL} operators, respectively. These transformations expose opportunities for reducing the cost during the evaluation of BGPs, \texttt{UNION} and \texttt{OPTIONAL} operators while preserving query semantics. Since there are many different ways to transform a BE-tree,
we
devise a cost model that accounts for the cost of evaluating both BGPs and these operators, and choose the transformation that most reduces the cost. Because of the vast space of possible transformations,
we propose a greedy strategy to determine the transformation step-by-step. The transformed BE-tree is then evaluated by the BGP-based scheme, enhanced by the query-time optimization called \emph{candidate pruning}, which prunes the search space of BGP evaluation on-the-fly whenever possible.

To summarize, we make the following contributions: 
\begin{enumerate}
\item We propose a novel BE-tree representation for the evaluation plan of a SPARQL-UO query and design two BE-tree transformation primitives, \emph{merge} and \emph{inject}, to generate more efficient SPARQL-UO query plans.    
\item We propose a cost model for SPARQL-UO queries and a cost-driven BE-tree transformation algorithm.
\item We design a query-time optimization called \emph{candidate pruning} that augments the BGP-based query evaluation scheme by pruning the search space.
\item We conduct experiments on large-scale real and synthetic RDF datasets, which shows that our method outperforms existing techniques by orders of magnitude.
\end{enumerate}

The remainder of the paper is organized as follows. A brief review of related work is given in Section \ref{sec:related}. Some necessary preliminary information is laid out in Section \ref{sec:preliminary}. Section \ref{sec:bgp} presents the BE-tree plan representation and transformations, as well as the query evaluation scheme. The cost-driven plan selection algorithm is proposed in Section \ref{sec:plan_selection}, and the query-time optimization is presented in \ref{sec:cand_pruning}. We experimentally evaluate our method in Section \ref{Sec:experiments} and conclude the paper in Section \ref{sec:conclusion}.

\section{Related Work}\label{sec:related}

Although the optimization of SPARQL queries has been extensively studied, most of the focus has been on evaluating BGPs \cite{DBLP:conf/sigmod/Atre15}, including graph-based approaches and relational approaches. Graph-based approaches include works on effective index strategies (e.g., gStore \cite{DBLP:journals/pvldb/ZouMCOZ11}) and join order optimization (e.g.,WCOJ \cite{10.1007/978-3-030-30793-6_15}). In contrast, relational approaches rely on a relational DBMS and consider RDF graphs as three-column tables or other complex table organizations \cite{vldb07_Abadi:2007,DBLP:conf/sigmod/BorneaDKSDUB13}. Processing SPARQL queries is then mapped to its relational counterparts, as done in Apache Jena \cite{Wilkinson:Jena2} and Virtuoso \cite{virtuosourl}. 
Relational BGP optimization approaches focus primarily on efficient data organization (e.g., property table \cite{Wilkinson:Jena2}, vertical partitioning \cite{DBLP:journals/vldb/AbadiMMH09} and single table exhaustive indexing \cite{Neumann2009}). However, these BGP optimization techniques
cannot optimize \texttt{UNION} and \texttt{OPTIONAL} since the semantics of these operators are fundamentally different from joins. As explained in later sections, our solution relies on BGP evaluation as a basic building block, and our proposed optimization techniques operate on a higher level than BGP evaluation techniques.


The existing research on SPARQL with \texttt{UNION} and \texttt{OPTIONAL} operators is primarily theoretical, studying their semantics and complexity \cite{DBLP:journals/tods/PerezAG09}. For example, Letelier et al. \cite{10.1145/2500130} propose a WDPT (well-designed pattern tree), which focuses on the analysis of containment and equivalence of a class of SPARQL graph patterns called \emph{well-designed patterns} and identifying the tractable components of their evaluation. However, no work has yet considered SPARQL-UO query optimization from the systems perspective, i.e., how to design an efficient SPARQL query processor to evaluate SPARQL-UO queries. To the best of our knowledge, LBR \cite{DBLP:conf/sigmod/Atre15} is the only work that considers \texttt{OPTIONAL} query optimization. It designs a new data structure GoSN, which is reminiscent of WDPT but focuses on the practical aspects of \texttt{OPTIONAL} pattern evaluation. Concretely, it proposes a query rewriting technique to reduce intermediate results of left-outer joins, the join semantics represented by \texttt{OPTIONAL}. To remove inconsistent variable bindings, LBR uses the \emph{nullification} and \emph{best-match} techniques previously studied in SQL left-outer joins \cite{DBLP:conf/sigmod/RaoPZ04}. LBR also proposes a semijoin strategy to prune candidates, extending the operator for the minimality of acyclic inner joins \cite{DBLP:journals/jacm/BernsteinC81}. However, it follows an execution strategy of two-pass semijoin scans following the graph of join variables, which introduces additional overhead during query execution.
In this paper, we propose a more comprehensive approach that deals with \texttt{UNION} and \texttt{OPTIONAL}.
Experiments also demonstrate that our techniques significantly outperform LBR on \texttt{OPTIONAL} queries. 





Note that our techniques to optimize \texttt{UNION} expressions can also be applied to conjunctive relational queries with unions due to their semantic similarity. In fact, all SPARQL-UO queries can be equivalently mapped to SQL, but the mapping of \texttt{OPTIONAL} expressions involves sub-selects in SQL, so our techniques cannot be applied without major adaptations \cite{prud'hommeaux_bertails_2008,chebotko2009semantics}. We are aware of a recent demonstration \cite{al2017optimizing} that proposes a \emph{join pushing} technique on conjunctive queries with unions, which pushes the join condition into the unioned sets if a cost model deems it more efficient. This is principally similar to our approach when applied to relational queries, but no description of the employed cost model is provided, which renders further comparison impossible.


\vspace{-0.1in}
\section{Preliminaries}\label{sec:preliminary}
\vspace{-0.1in}
\Paragraph{RDF Dataset.} Table \ref{tab:rdf} is an example RDF dataset (defined as follows) containing seven triples.



\begin{definition} [RDF dataset] Let pairwise disjoint infinite sets $I$, $B$, and $L$ denote IRI, blank nodes and literals, respectively. An RDF dataset $D$ is a collection of triples $D = \{ t_1, t_2, ..., t_{|D|} \}$, where each triple is a three-tuple  $t = \langle subject,property,object \rangle \in (I \cup B)  \times I \times (I\cup B \cup L)$.
\end{definition} 
\Paragraph{SPARQL Query---Syntax.} Assume that there is an infinite set $V$ representing the variables that appear in the query. All variables differ from IRIs and literals by leading with a question mark (\texttt{?}), so  the set $V$ is disjoint with $I$ and $L$. This work focuses on \texttt{SELECT} queries, which retrieve results by matching the graph pattern in the query with the dataset. We note that SPARQL provides other query forms for updating the database, constructing RDF datasets, asking whether a graph pattern exists in the database and describing resources, etc., that are beyond the scope of our consideration. A \texttt{SELECT} query is of the form ``\texttt{SELECT} ${v_1}$ ${v_2}$ ... ${v_k}$ \texttt{WHERE} \{...\}'', in which the \texttt{SELECT} clause represents the query header, and the \texttt{WHERE} clause represents the query body (Figure \ref{fig:sparql}(a)). The \texttt{SELECT} clause determines the projection variables that need to appear in the query results, and the \texttt{WHERE} clause gives the group graph pattern that needs to be matched over the RDF dataset, which may consists of many other types of graph patterns, defined as follows.




\begin{table*}[!htbp]
	\centering
	\caption{An example RDF dataset}
	\scriptsize
	\begin{tabular}{|c|c|l|}
		\hline
		Subject & Predicate & \multicolumn{1}{c|}{Object} \\ \hline
		\texttt{dbr:George\_W.\_Bush} & \texttt{foaf:name} & ``George Walker Bush"@en \\ \hline
		\texttt{dbr:George\_W.\_Bush} & \texttt{rdfs:label} & ``George W. Bush"@en \\ \hline
		\texttt{dbr:George\_W.\_Bush} & \texttt{dbo:wikiPageWikiLink} & \texttt{dbr:President\_of\_the\_United\_States} \\ \hline
		\texttt{dbr:Bill\_Clinton} & \texttt{foaf:name} & ``Bill Clinton"@en \\ \hline
		\texttt{dbr:Bill\_Clinton} & \texttt{dbo:wikiPageWikiLink} & \texttt{dbr:President\_of\_the\_United\_States} \\ \hline
		\texttt{dbr:Bill\_Clinton} & \texttt{dbp:birthDate} & ``1946-08-19"\textasciicircum \textasciicircum \texttt{xsd:date} \\ \hline
		\texttt{dbr:Bill\_Clinton} & \texttt{owl:sameAs} & \texttt{fbp:Clinton\_William\_Jefferson\_1946-} \\ \hline
	\end{tabular}
	\label{tab:rdf}
\end{table*}





\begin{definition}[Triple Pattern]\label{def:triple_pattern}
A triple $t \in (V \cup I) \times (V \cup I) \times (V \cup I \cup L)$ is a \emph{triple pattern}.
\end{definition}

Basic graph patterns (BGPs) are composed of triple patterns. To give a formal definition of BGPs, we need to first introduce the notion of \emph{coalescability}.

\begin{definition}[Coalescable triple patterns] 
\label{def:coalesce-triple}
We say that the triple patterns $t_1 = \langle s_1,p_1,o_1 \rangle$ and $t_2 = \langle s_2,p_2,o_2 \rangle$ are \underline{coalescable} if and only if $\{s_1$, $o_1\}$ and $\{s_2$, $o_2\}$ share at least one common variable. 
\end{definition}
\nop{
\begin{itemize}
    \item $s_1$ and $s_2$ are variables, $s_1 = s_2$;
    \item $s_1$ and $o_2$ are variables, $s_1 = o_2$;
    \item $o_1$ and $s_2$ are variables, $o_1 = s_2$;
    \item $o_1$ and $o_2$ are variables, $o_1 = o_2$.
\end{itemize}
}

Intuitively, two triple patterns are coalescable if they have common variables at the subject or object positions. Since a BGP is composed of triple patterns, we can extend coalescability to BGPs, where we require some of their constituent triple patterns to be coalescable.


\begin{definition}[Coalescable BGPs] 
\label{def:coalesce-bgp}
We say that the BGPs $b_1$ and $b_2$ are \underline{coalescable} if there exist $t_{i_1} \in b_1$ and $t_{i_2} \in b_2$ such that $t_1$ and $t_2$ are coalescable triple patterns.
\end{definition}


\begin{definition}[Basic Graph Pattern (BGP)]\label{def:bgp}
A BGP is recursively defined as follows:
\begin{enumerate}
\item  A triple pattern $t$ is a BGP;
\item if $P_1$ and $P_2$ are coalescable BGPs, $P_1$ \texttt{AND} $P_2$ is also a BGP.
\end{enumerate}
\end{definition}

\begin{definition}[Graph Pattern, Group Graph Pattern] \label{def:grouppattern}

A \underline{graph pattern} is recursively defined as follows:
\begin{enumerate}
	\item if $P$ is a BGP, $P$ is a graph pattern;
	\item if $P$ is a group graph pattern (defined below), $P$ is a graph pattern;
	\item if $P_1$ and $P_2$ are both graph patterns, $P_1$ \texttt{AND} $P_2$ is also a graph pattern;
            \item if $P_1$ and $P_2$ are both graph patterns,  $\{P_1 \}$ \texttt{UNION} $\{P_2\}$, $P_1$ \texttt{OPTIONAL}
$\{P_2\}$ are both graph patterns. Note that $\{P_i\}$ denotes a \textit{group graph pattern} (defined below);
\end{enumerate}

A \emph{group graph pattern} $P$ is recursively defined as follows:
\begin{enumerate}
            \item If $P$ is a graph pattern, $\{P\}$ is a group graph pattern.
\end{enumerate}
\end{definition}

Figure \ref{fig:sparql} is an example SPARQL query with six triple patterns (${t_{1..6}}$) and \texttt{UNION} and \texttt{OPTIONAL} operators.


\begin{figure}
	\centering
	\includegraphics[scale=0.55]{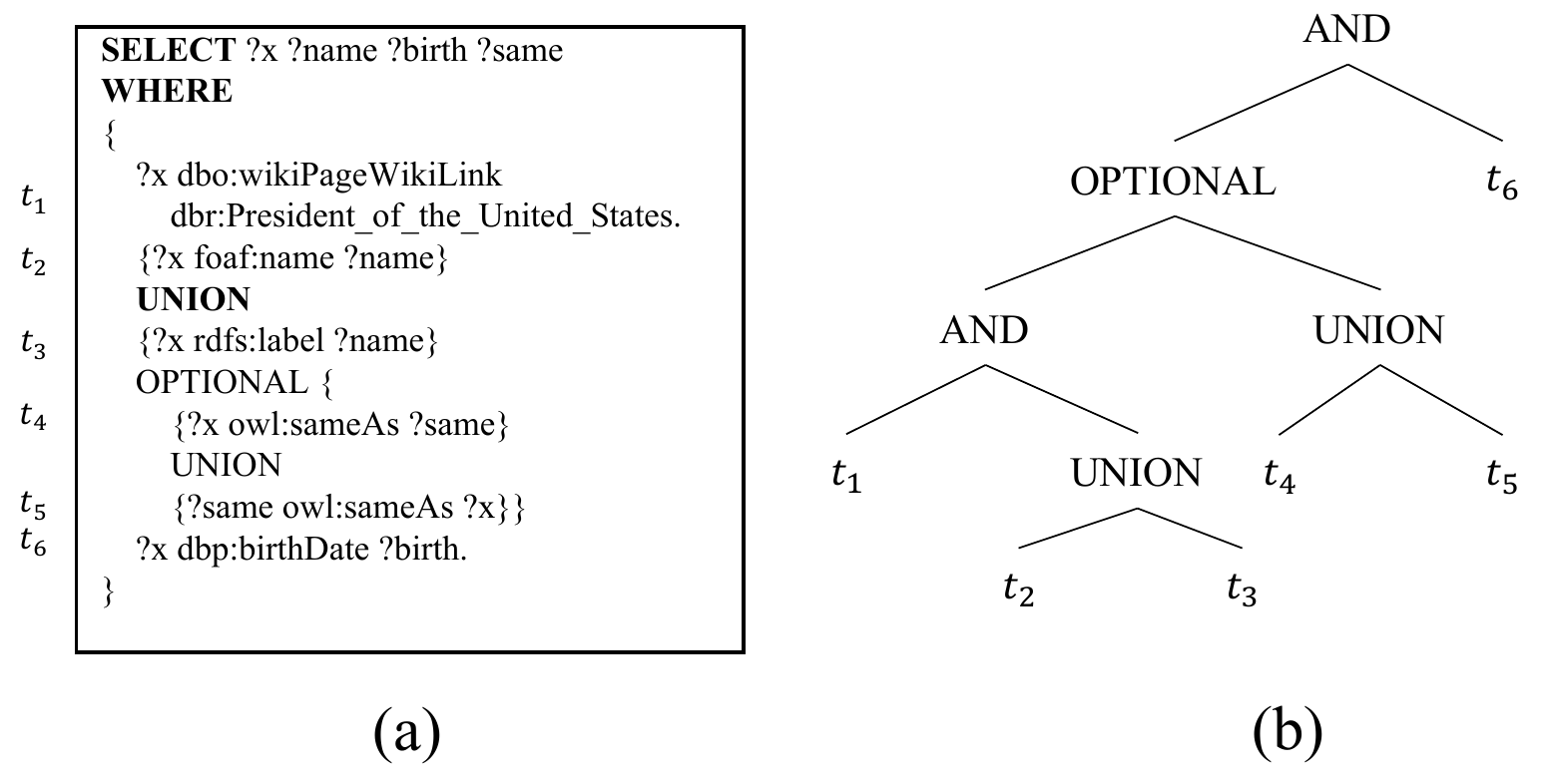}
	\vspace{-0.1in}
	\caption{(a) An example SPARQL query and (b) Binary Tree Expression}
	\label{fig:sparql}
	\vspace{-0.1in}
\end{figure}

\nop{
\begin{figure}
	\centering
	\includegraphics[scale=0.6]{figure/semexpt.pdf}
	\vspace{-0.1in}
	\caption{Binary tree expression of Graph Pattern}
	\label{fig:semexpt}
	\vspace{-0.1in}
\end{figure}
}
\Paragraph{SPARQL Query---Semantics.} The semantics of any graph pattern can be uniquely determined, since the \texttt{OPTIONAL} clause is left-associative and the priority of operators is defined as $\{\}$ $\prec$ \texttt{UNION} $\prec$ \texttt{AND} $\prec$ \texttt{OPTIONAL}.



A graph pattern can then be converted to an expression containing triple patterns, built-in conditions, and binary operators \texttt{AND}, \texttt{UNION} and \texttt{OPTIONAL}, which accept two graph patterns as their operands. Such an expression can be equivalently represented by a binary tree, where each leaf node represents a triple pattern and each internal node represents a binary operator. Figure \ref{fig:sparql}(b) shows such a binary tree expression of the outermost group graph pattern of the query in Figure \ref{fig:sparql}(a).

A graph pattern $P$ is matched on an RDF dataset $D$ (denoted by $[\![P]\!]_D$) to produce a bag (i.e., multi-set) of mappings $\{{\mu_1}, {\mu_2}, \ ... \ , \ {\mu_n}\}$, which may contain duplicate mappings. A mapping $\mu : V \mapsto U$ is a partial function from $V$ to $(I \cup L)$, where $V$ represents the variables that appear in the query, and $I$ and $L$ denote the sets of IRI and literals, respectively. The set of variables appearing in mapping $\mu$ is denoted by $dom(\mu)$. The two mappings ${\mu_1}$ and ${\mu_2}$ are defined to be \emph{compatible} (denoted by ${\mu_1} \sim {\mu_2}$) if and only if for all variables $v \in dom({\mu_1}) \ \cap \ dom({\mu_2})$ satisfying ${\mu_1}(v) = {\mu_2}(v)$. Intuitively, this means that the common variables of ${\mu_1}$ and ${\mu_2}$ are mapped to the same values. In the case where ${\mu_1}$ and ${\mu_2}$ are compatible, ${\mu_1} \cup {\mu_2}$ is also a mapping. If the two mappings ${\mu_1}$ and ${\mu_2}$ are \emph{incompatible}, we denote the case as ${\mu_1} \nsim {\mu_2}$.

We denote two bags of mappings by ${\Omega_1}$ and ${\Omega_2}$, and define several operators on bags as follows:
\begin{enumerate}
	\item ${\Omega_1} \Join {\Omega_2} = \{ {\mu_1} \cup {\mu_2} \ | \ {\mu_1} \in {\Omega_1} \wedge {\mu_2} \in {\Omega_2} \wedge {\mu_1} \sim {\mu_2} \}$.
	\item ${\Omega_1} \cup_{bag} {\Omega_2} =  \{ {\mu_1} \ | \ {\mu_1} \in {\Omega_1} \} \bigcup_{bag} \{ {\mu_2} \ | \ {\mu_2} \in {\Omega_2} \}$.
	\item ${\Omega_1} \setminus {\Omega_2} = \{{\mu_1} \in {\Omega_1}\ | \ \forall {\mu_2} \in {\Omega_2} \ : \ {\mu_1} \nsim {\mu_2} \}$.
	\item ${\Omega_1} \ {\tiny \textbf{\textifsym{d|><|}}} \ {\Omega_2} = ({\Omega_1} \Join {\Omega_1}) \bigcup_{bag} ({\Omega_1} \setminus {\Omega_1})$
\end{enumerate}

Note that the operators above all preserve duplicate elements, as they follow the bag semantics.

\begin{definition} [Evaluation of Graph Patterns on an RDF dataset]\label{def:eval}
The evaluation of graph patterns $P$ on an RDF dataset $D$ (denoted by $[\![P]\!]_D$) is recursively defined as follows:
\begin{enumerate}
	\item If $P$ is a triple pattern $t$, $[\![P]\!]_D = \{ \mu \ | \ var(t)=dom(\mu) \ \wedge \ \mu(t) \in D \}$ ($var(t)$ represents all variables occurring in $t$, and $\mu(t)$ mean that all variables appearing in $t$ are replaced by $\mu$).
	\item If $P = \{P_1\}$, $[\![P]\!]_D = [\![P_1]\!]_D$.
	\item If $P = (P_1 \ \texttt{AND} \ P_2)$, $[\![P]\!]_D = [\![P_1]\!]_D \Join [\![P_2]\!]_D$.
	\item If $P = (P_1 \ \texttt{UNION} \ P_2)$, $[\![P]\!]_D = [\![P_1]\!]_D \bigcup_{bag} [\![P_2]\!]_D$.
	\item If $P = (P_1 \ \texttt{OPTIONAL} \ P_2)$, $[\![P]\!]_D = [\![P_1]\!]_D \ {\tiny \textbf{\textifsym{d|><|}}} \ [\![P_2]\!]_D$. We say that ${P_1}$ is a \texttt{OPTIONAL}-left graph pattern, and ${P_2}$ is a \texttt{OPTIONAL}-right graph pattern.
\end{enumerate}
\end{definition}

\section{Plan Representation: BGP-based Evaluation Tree}\label{sec:bgp}

The most straightforward approach for evaluating a graph pattern $P$ is to employ a bottom-up strategy on the binary tree representation. In each step, we either evaluate a triple pattern, or perform a binary operator (\texttt{AND}, \texttt{UNION}, or \texttt{OPTIONAL}).
This binary-tree-based evaluation  strictly follows the SPARQL semantics discussed in Section \ref{sec:preliminary}, but it has a number of inherent performance limitations due to the large number of intermediate results generated for each triple pattern at the leaf nodes of the binary tree expression. To illustrate this, consider the simple SPARQL query in Figure \ref{fig:semetpb}. Note that the outermost group graph pattern of this query only contains a BGP. Following the binary tree expression-based method, we first need to obtain $[\![t_1]\!]_D$ and  $[\![t_2]\!]_D$. Obviously, the triple pattern $t_2$ will generate a large number of intermediate results, since most persons in the database have their birth dates as an attribute.

\begin{figure*}[ht]
	\centering
	\includegraphics[scale=1.4]{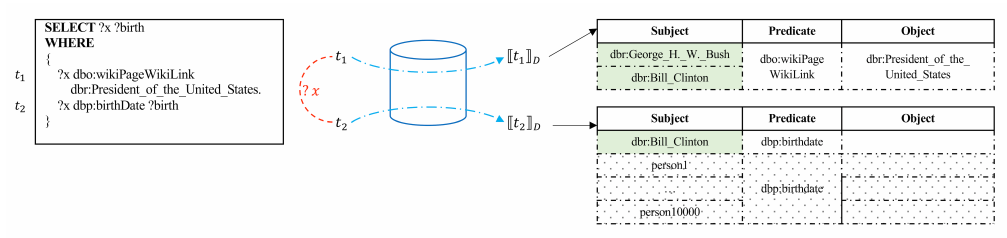}
	\vspace{-0.1in}
	\caption{Inefficiency of binary-tree-based query evaluation}
	\label{fig:semetpb}
	\vspace{-0.1in}
\end{figure*}

It is evidently more desirable to use BGP evaluation as the basic building block for executing SPARQL queries, employing an optimized BGP query evaluation method such as those used in RDF-3x \cite{Neumann2009}, SW-store \cite{vldb07_Abadi:2007}, gStore \cite{DBLP:journals/pvldb/ZouMCOZ11} and Jena~\cite{Wilkinson:Jena2}. Therefore, in our approach, we design a \underline{B}GP-based \underline{E}valuation Tree (BE-tree) to represent a SPARQL query evaluation plan.


\subsection{BE-Tree Structure}

\nop{
\begin{definition}[\underline{B}GP-based \underline{E}valuation Tree (BE-tree)] 
\label{def:betree}
Given a group graph pattern $Q$, its corresponding BE-tree $T(Q)$ is recursively defined as follows:
\begin{itemize}
    \item The root of $T(Q)$ is a group graph pattern node representing query $Q$;
    \item A \underline{group graph pattern node} represents a group graph pattern (Definition \ref{def:grouppattern}). It has one or more child nodes, which may be a group graph pattern node, a BGP node, a \texttt{UNION} node, an \texttt{OPTIONAL} node, or a \texttt{FILTER} node;
    \item A \underline{BGP node} represents a BGP (Definition \ref{def:bgp}), and must be a leaf node;
    \item A \underline{\texttt{UNION} node} represents the \texttt{UNION} expression that links two or more group graph patterns, called \texttt{UNION}'ed group graph patterns. It has two or more child nodes, which are all group graph pattern nodes;
    \item An \underline{\texttt{OPTIONAL} node} represents the \texttt{OPTIONAL} expression that links the graph patterns to its left and the adjacent group graph pattern to its right, called the \texttt{OPTIONAL}-right group graph pattern. It has exactly one child node, which is a group graph pattern node representing the adjacent group graph pattern to its right;
    \item A \underline{\texttt{FILTER} node} represents the \texttt{FILTER} expression and its accompanying built-in condition, and must be a leaf node.
\end{itemize}
\end{definition}
}

\begin{definition}[\underline{B}GP-based \underline{E}valuation Tree (BE-tree)] 
\label{def:betree}
Given a group graph pattern $Q$, its corresponding BE-tree $T(Q)$ is recursively defined as follows:
\begin{itemize}
    \item The root of $T(Q)$ is a \underline{group graph pattern node} (Definition \ref{def:grouppattern}) representing the query $Q$;
    \item An internal node of $T(Q)$ can be one of \{\texttt{UNION}, \texttt{OPTIONAL}, group graph pattern\} nodes:
    \begin{itemize}
        \item A \underline{\texttt{UNION} node} represents the \texttt{UNION} expression that links two or more group graph patterns, called \texttt{UNION}'ed group graph patterns. It has two or more child nodes, which are all group graph pattern nodes;
        \item An \underline{\texttt{OPTIONAL} node} represents the \texttt{OPTIONAL} expression that links \texttt{OPTIONAL}-left and \texttt{OPTIONAL}-right graph patterns. It has exactly one child node: the \texttt{OPTIONAL}-right graph pattern, which is a group graph pattern node;
    \end{itemize}
        \item A leaf node of $T(Q)$ is a \texttt{BGP} node (Definition \ref{def:bgp}).
        \nop{\begin{itemize}
            \item A \underline{BGP node} represents a BGP (Definition \ref{def:bgp});
            \item A \underline{\texttt{FILTER} node} represents the \texttt{FILTER} expression and its accompanying built-in condition.
        \end{itemize}}
\end{itemize}
\end{definition}

According to the above definition, each leaf node in a BE-tree corresponds to a BGP, and each internal node corresponds to a group graph pattern, a \texttt{UNION} expression, or an \texttt{OPTIONAL} expression. 
Figure \ref{fig:hrgp} shows the general structure of a BE-tree. The edge labels indicate how many child nodes of this type are permitted to occur: $k$ indicates that exactly $k$ such child nodes must occur, and $k..*$ indicates that $k$ or more such child nodes can occur. For convenience, we call a group of sibling nodes a \emph{level} of nodes. 

\begin{figure}[ht]
	\centering
	\includegraphics[scale=0.325]{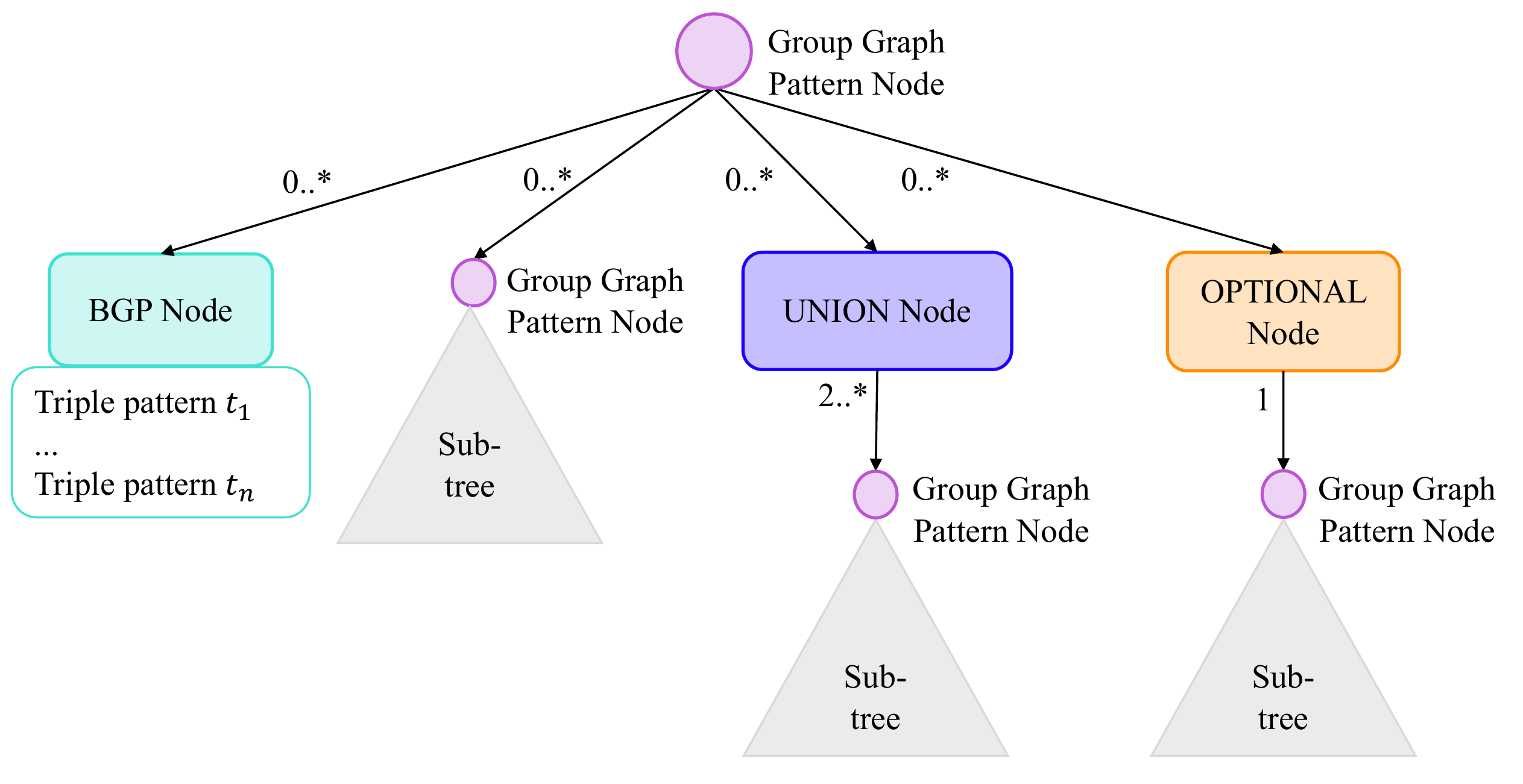}
	\vspace{-0.1in}
	\caption{Hierarchical structure of the BE-tree}
	\label{fig:hrgp}
	\vspace{-0.1in}
\end{figure}

It is straightforward to construct a BE-tree from a SPARQL query. Joins between graph patterns are implicitly expressed in the BE-tree as the sibling relation between nodes. Therefore, we first initiate a group graph pattern node as the root, denoting the outermost group graph pattern in the query. Then we put each joined graph pattern within the outermost group graph pattern as the root's children in the original order. For each nested group graph pattern, we consider them in turn as the root of a subtree, and recursively execute the aforementioned process.

Note that such a construction procedure generates triple pattern nodes, which are not identified in Definition \ref{def:betree}. In order to eliminate them, we coalesce sibling triple pattern nodes into \emph{maximal} BGP nodes, in that no further coalescing can be performed (The coalescability of triple patterns and BGPs is defined in Definitions \ref{def:coalesce-triple} and \ref{def:coalesce-bgp}). We place BGP nodes where its constituent leftmost triple pattern originally resides. It is evident that there is a one-to-one mapping between SPARQL queries and BE-trees by this construction process.

As a concrete example, the BE-tree of the query in Figure \ref{fig:sparql}(a) is given below (Figure \ref{fig:betree-example}). Note that the triple patterns $t_1$ and $t_6$ are coalesced to form a BGP node; no other triple patterns cannot be coalesced, and thus form individual BGP nodes on their own.

\begin{figure}[ht]
	\centering
	\includegraphics[scale=0.35]{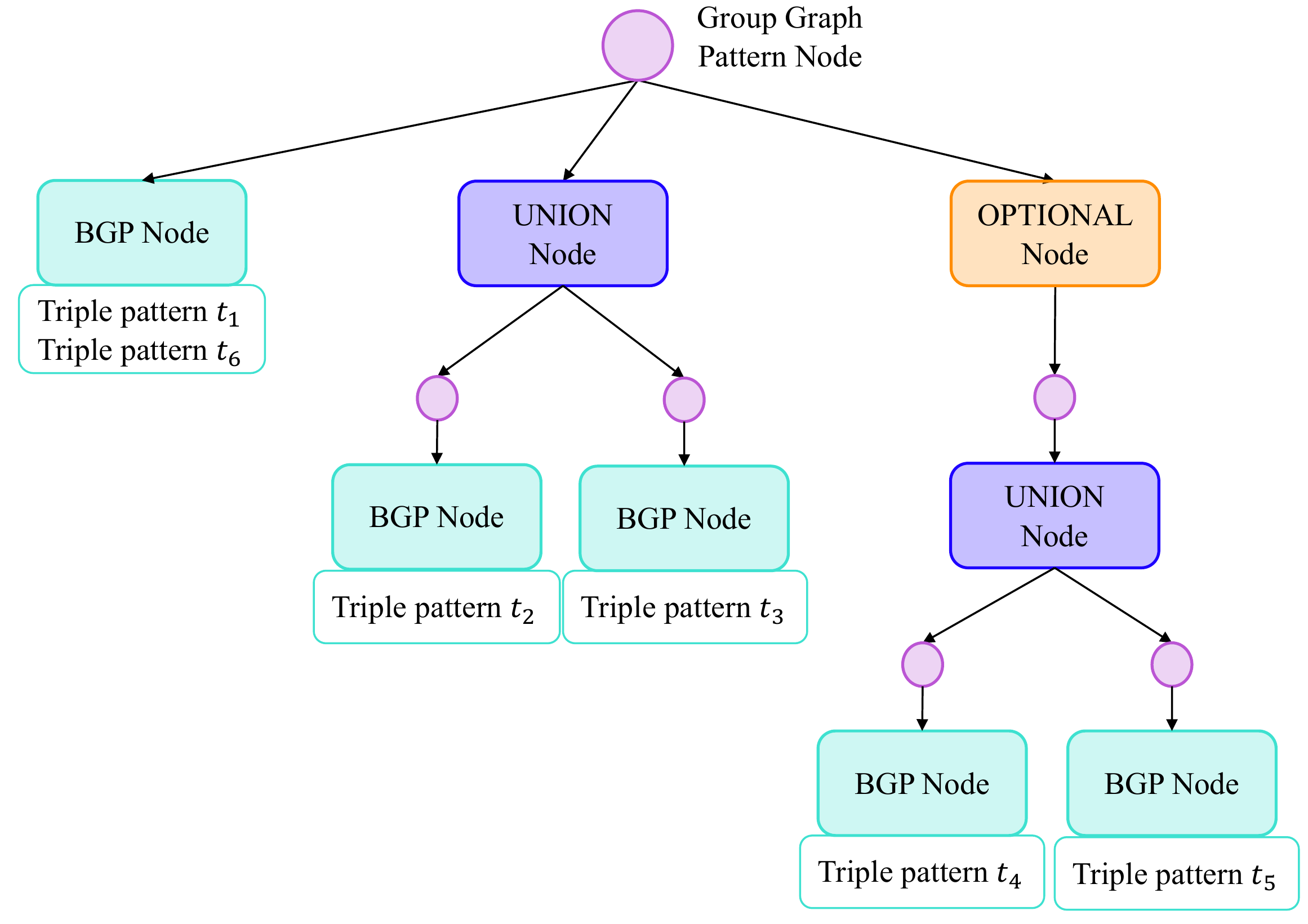}
	\vspace{-0.1in}
	\caption{An example BE-tree}
	\label{fig:betree-example}
	\vspace{-0.1in}
\end{figure}

\begin{algorithm}\small
	\caption{\small BGP-based query evaluation}
	\label{alg:bgpe}
	\SetKwInOut{KwIn}{Input}
	\KwIn{RDF dataset $D$, BE-tree $T(Q)$}
	\SetKwInOut{KwOut}{Output}
	\KwOut{${\left[\!\left[ {Q} \right]\!\right]_D}$}
	
	\SetKwFunction{FMain}{BGPBasedEvaluation}
	\SetKwFunction{FBGP}{EvaluateBGP}
	\SetKwProg{Fn}{Function}{:}{}
	\Fn{\FMain{$D, T(Q)$}}
	{
		$r \leftarrow \emptyset$\;
		Let $root$ be the root of $T(Q)$ \;
		\ForEach{child node $e_i$ of $root$}
		{
			\If {$e_i$ is a group graph pattern node}
			{
				\If {$r = \emptyset$}
				{
					$r \leftarrow \FMain(D, T(e_i))$\;
				}
				\Else
				{
					$r \leftarrow r \ \Join \ \FMain(D, T(e_i))$\;
				}
			}
			\ElseIf {$e_i$ is a BGP node}
			{
				$r \leftarrow r \ \Join \ \FBGP(D, e_i)$\;
			}
			\ElseIf {$e_i$ is a \texttt{UNION} node}
			{
				$u \leftarrow \emptyset$\;
				\ForEach {child group graph pattern node $P$ of $e_i$}
				{
					$u \leftarrow u \ \cup_{bag} \ \FMain(D, T(P))$\;
				}
				$r \leftarrow r \ \Join \ u$\;
			}
			\ElseIf {$e_i$ is an \texttt{OPTIONAL} node}
			{
				Get the child group graph pattern node $P$ of $e_i$\;
				$o \leftarrow \FMain(D, T(P))$\;
				$r \leftarrow r \ {\tiny \textbf{\textifsym{d|><|}}} \ o$\;
			}
		}
		\nop{\ForEach {child node $e_i$ of $root$}
		{
			\If{$e_i$ is a \texttt{FILTER} node}
			{
			    Get the built-in condition $C$ of $e_i$\;
			    $r \leftarrow \{ \mu \ | \ \mu \in r \wedge {\mu(C)} \}$
			}
		}}
		\Return $r$
	}
\end{algorithm}

Algorithm \ref{alg:bgpe} shows the pseudocode of the BGP-based solution for answering SPARQL query $Q$ based on BE-tree $T(Q)$. The basic idea is to rely on the underlying BGP evaluation engine to evaluate each BGP separately, and combine the results afterwards based on the BE-tree. The return variable $r$, which indicates the result set, is first initialized to be empty (Line 2). Then the child nodes of the BE-tree's root are processed (Lines 4-20).

During the iteration across child nodes of the root, the following cases are considered:
\begin{itemize}
    \item If the current child node is a group graph pattern node, it is recursively evaluated by calling the function on the subtree rooted at it, and the retrieved results are joined with $r$ (Lines 5-9).
    \item If the current child node is a BGP node, it is evaluated by some existing BGP query evaluation technique, and the retrieved results are joined with $r$ (Lines 10-11).
    \item If the current child node is a \texttt{UNION} node, each of its child group graph pattern nodes is recursively evaluated, the results of which are merged by the $\cup_{bag}$ operation. The merged result is finally joined with $r$ (Lines 12-16).
    \item If the current child node is an \texttt{OPTIONAL} node, its child group graph pattern node is recursively evaluated, and the retrieved results are left-outer-joined with $r$ (Lines 17-20).
\end{itemize}

\nop{In the second pass, all the \texttt{FILTER} nodes are handled by filtering $r$ by their built-in conditions (Lines 21-24). \texttt{FILTER} expressions cannot be handled in the first pass because the scope of them is the entire group graph pattern.}

\subsection{BE-Tree Transformations}
\label{sec:transformation}


In the previous subsection, we invoke the BGP-based evaluation procedure (Algorithm \ref{alg:bgpe}) on the BE-tree directly constructed from the query. However, it is possible to improve the efficiency of query evaluation by altering the plan. We achieve this by making certain semantics-preserving \emph{transformations} to the BE-tree.


\subsubsection{Goals}
Our aim is to transform the original BE-tree so that the resulting BE-tree has the following properties:

\begin{itemize}
    \item \emph{Validity:} the resulting BE-tree should maintain the previously defined tree structure and have the same node types. It should be one-to-one mapped to a syntactically valid SPARQL query by the direct construction process introduced in Section \ref{sec:bgp}.
    \item \emph{Efficiency:} the evaluation of the resulting BE-tree should be more efficient than the original BE-tree. In other words, the expected cost of evaluating the resulting BE-tree should be lower.
\end{itemize}

\subsubsection{Semantics-Preserving Transformations}

We set out to transform the BE-tree with the two aforementioned goals in mind.
In order to optimize for query execution efficiency while maintaining correctness, we need to leverage the inherent semantic equivalences regarding the \texttt{UNION} and \texttt{OPTIONAL} operators, formally expressed through the following two theorems.

\begin{theorem}
For any graph pattern $P_1$, $P_2$, $P_3$ and any RDF dataset $D$, we have
    {\small \begin{equation}
        [\![P_1 \ \texttt{AND} \ (P_2  \ \texttt{UNION} \ P_3)]\!]_D = [\![(P_1 \ \texttt{AND} \ P_2) \ \texttt{UNION} \ (P_1 \ \texttt{AND} \ P_3)]\!]_D. \nonumber
    \end{equation}}
\label{thm:union_eq}
\end{theorem}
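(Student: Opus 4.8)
The plan is to reduce the theorem to a purely algebraic identity on bags of mappings and then verify that identity by a multiplicity-counting argument. First I would apply the evaluation rules of Definition \ref{def:eval}: writing $\Omega_i = [\![P_i]\!]_D$ for $i \in \{1,2,3\}$, rule (3) turns each \texttt{AND} into a join $\Join$ and rule (4) turns the \texttt{UNION} into a bag union $\cup_{bag}$, so the left-hand side becomes $\Omega_1 \Join (\Omega_2 \cup_{bag} \Omega_3)$ and the right-hand side becomes $(\Omega_1 \Join \Omega_2) \cup_{bag} (\Omega_1 \Join \Omega_3)$. The theorem is therefore equivalent to the distributivity of $\Join$ over $\cup_{bag}$, which I would isolate and prove as the core lemma.

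Because these are bags rather than sets, equality cannot be argued by mutual inclusion alone; I would instead compare multiplicities. I would introduce, for a bag $\Omega$ and a mapping $\mu$, the multiplicity $\text{card}(\mu, \Omega)$ counting the occurrences of $\mu$ in $\Omega$. Two facts follow immediately from the operator definitions: bag union adds multiplicities, $\text{card}(\mu, \Omega \cup_{bag} \Omega') = \text{card}(\mu, \Omega) + \text{card}(\mu, \Omega')$, while the join accumulates products of multiplicities over all compatible pairs that coalesce to $\mu$, namely $\text{card}(\mu, \Omega \Join \Omega') = \sum \text{card}(\mu_a, \Omega)\,\text{card}(\mu_b, \Omega')$ where the sum ranges over pairs $(\mu_a, \mu_b)$ with $\mu_a \sim \mu_b$ and $\mu_a \cup \mu_b = \mu$.

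With these two facts in hand, the verification is mechanical. Expanding the multiplicity of $\mu$ on the left-hand side, I substitute the additive rule for $\cup_{bag}$ inside the join sum and use distributivity of multiplication over addition in $\mathbb{N}$ to split the single sum into two: one indexed by compatible pairs drawn from $\Omega_1$ and $\Omega_2$, the other by pairs drawn from $\Omega_1$ and $\Omega_3$. Expanding the right-hand side directly via the additive rule for the outer $\cup_{bag}$ yields exactly these same two sums. Since the multiplicity of every mapping agrees on both sides, the two bags are equal, establishing the lemma and hence the theorem.

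The only real subtlety, and the place I would be most careful, is the bookkeeping of the compatibility condition $\mu_a \sim \mu_b$ under this split: I must confirm that a pair consisting of some $\mu_1 \in \Omega_1$ and some $\mu'$ contributed by the union is compatible and coalesces to $\mu$ exactly when the corresponding pair appears in precisely one of the two right-hand joins, so that no occurrence is double-counted or dropped. This holds because both the compatibility predicate $\sim$ and the merge $\mu_a \cup \mu_b$ depend only on the two mappings themselves and not on which bag supplied $\mu'$; hence the index sets partition cleanly and the term-by-term match is exact.
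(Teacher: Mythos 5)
Your proposal is correct and takes essentially the same route as the paper: both unfold the two sides via the evaluation rules of Definition \ref{def:eval} and reduce the theorem to the distributivity of $\Join$ over $\cup_{bag}$. The only difference is one of rigor, not of approach --- the paper asserts that distributivity step as immediate from the definitions of the bag operators, whereas you justify it explicitly by comparing multiplicities (additivity of $\cup_{bag}$, product-sum expansion of $\Join$ over compatible pairs), which is a sound elaboration of the very step the paper leaves implicit.
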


\begin{proof}
By Definition \ref{def:eval} and the definitions of the operators on bags, we have
    {\small \begin{equation}
    \begin{split}
        & [\![P_1 \ \texttt{AND} \ (P_2  \ \texttt{UNION} \ P_3)]\!]_D \\
        = \ & [\![P_1]\!]_D \Join [\![P_2  \ \texttt{UNION} \ P_3]\!]_D \\
        = \ & [\![P_1]\!]_D \Join ([\![P_2]\!]_D \ \cup_{bag} \ [\![P_3]\!]_D) \\
        = \ & ([\![P_1]\!]_D \Join [\![P_2]\!]_D) \cup_{bag} ([\![P_1]\!]_D \Join [\![P_3]\!]_D) \\
        = \ & [\![P_1 \ \texttt{AND} \ P_2]\!]_D \cup_{bag} [\![P_1 \ \texttt{AND} \ P_3]\!]_D \\
        = \ & [\![(P_1 \ \texttt{AND} \ P_2) \ \texttt{UNION} \ (P_1 \ \texttt{AND} \ P_3)]\!]_D. \nonumber
    \end{split}
    \end{equation}}
\end{proof}

Note that Theorem \ref{thm:union_eq} is also trivially extendable to \texttt{UNION} nodes with more than two child nodes.

\begin{theorem}
For any graph pattern $P_1$, $P_2$ and any RDF dataset $D$, we have
    {\small \begin{equation}
        [\![P_1 \ \texttt{OPTIONAL} \ P_2]\!]_D = [\![P_1 \ \texttt{OPTIONAL} \ (P_1 \ \texttt{AND} \ P_2)]\!]_D. \nonumber
    \end{equation}}
\label{thm:optional_eq}
\end{theorem}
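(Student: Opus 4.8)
The plan is to reduce the claimed pattern-level identity to an algebraic identity on bags of mappings and then verify it componentwise from the definition of the left-outer join, presenting the whole thing as a chain of equalities in the style of the proof of Theorem \ref{thm:union_eq}. Writing $\Omega_1 = [\![P_1]\!]_D$ and $\Omega_2 = [\![P_2]\!]_D$, Definition \ref{def:eval} gives $[\![P_1 \ \texttt{AND} \ P_2]\!]_D = \Omega_1 \Join \Omega_2$, so the theorem is equivalent to
\[
 \Omega_1 \ {\tiny \textbf{\textifsym{d|><|}}} \ \Omega_2 \;=\; \Omega_1 \ {\tiny \textbf{\textifsym{d|><|}}} \ (\Omega_1 \Join \Omega_2).
\]
Expanding both sides with the definition of $\ {\tiny \textbf{\textifsym{d|><|}}}\ $ splits each into a ``matched'' part and an ``unmatched'' part, so it suffices to establish the two bag identities
\[
 \Omega_1 \Join (\Omega_1 \Join \Omega_2) = \Omega_1 \Join \Omega_2, \qquad \Omega_1 \setminus (\Omega_1 \Join \Omega_2) = \Omega_1 \setminus \Omega_2,
\]
after which a single $\cup_{bag}$ on each side closes the argument.

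The cleaner half is the unmatched (anti-join) part. Since $\setminus$ simply retains those $\mu_1 \in \Omega_1$ (with their multiplicities) that are incompatible with \emph{every} mapping on the right, it suffices to show, for a fixed $\mu_1 \in \Omega_1$, that $\mu_1 \nsim \mu_2$ for all $\mu_2 \in \Omega_2$ if and only if $\mu_1 \nsim \omega$ for all $\omega \in \Omega_1 \Join \Omega_2$. For the forward direction I would note that any $\omega = \mu_1^{*} \cup \mu_2 \in \Omega_1 \Join \Omega_2$ agrees with $\mu_2$ on $dom(\mu_2)$, so a variable witnessing $\mu_1 \nsim \mu_2$ also witnesses $\mu_1 \nsim \omega$. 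For the converse I would use $\mu_1 \in \Omega_1$ itself: if $\mu_1 \sim \mu_2$ for some $\mu_2 \in \Omega_2$, then $\mu_1 \cup \mu_2 \in \Omega_1 \Join \Omega_2$ is compatible with $\mu_1$, contradicting incompatibility with all of $\Omega_1 \Join \Omega_2$. This converse is precisely where repeating $P_1$ inside the \texttt{OPTIONAL}-right pattern is essential, since it guarantees a witness living in $\Omega_1 \Join \Omega_2$.

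The matched part reduces to $\Omega_1 \Join (\Omega_1 \Join \Omega_2) = \Omega_1 \Join \Omega_2$, and this is the step I expect to be the main obstacle. The containment $\supseteq$ is easy: for $\mu_1 \cup \mu_2 \in \Omega_1 \Join \Omega_2$, the mapping $\mu_1 \in \Omega_1$ is compatible with $\mu_1 \cup \mu_2$ and $\mu_1 \cup (\mu_1 \cup \mu_2) = \mu_1 \cup \mu_2$, so the element reappears on the left. The delicate direction is $\subseteq$: a generic element of $\Omega_1 \Join (\Omega_1 \Join \Omega_2)$ has the form $\mu_1' \cup \mu_1 \cup \mu_2$ for possibly \emph{distinct} $\mu_1',\mu_1 \in \Omega_1$, so I must rule out that re-joining with a different $\mu_1'$ manufactures a mapping absent from $\Omega_1 \Join \Omega_2$. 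I would control this through an absorption property of the join with $\Omega_1$, and, under bag semantics, through careful multiplicity bookkeeping, since multiple copies of $\mu_1$ in $\Omega_1$ could otherwise inflate the count of $\mu_1 \cup \mu_2$; the natural way to make the absorption rigorous is to exploit that distinct mappings of $\Omega_1$ sharing a common domain are pairwise incompatible, so their ``cross'' union does not arise. Once the identity $\Omega_1 \Join (\Omega_1 \Join \Omega_2) = \Omega_1 \Join \Omega_2$ is secured, combining it with the anti-join identity via $\cup_{bag}$ yields the theorem exactly as in the closing steps of Theorem \ref{thm:union_eq}.
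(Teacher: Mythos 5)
Your decomposition is exactly the paper's: its proof is the same chain of equalities that expands the left-outer join into a matched and an unmatched part and then asserts, in a single step, the two bag identities $\Omega_1 \Join (\Omega_1 \Join \Omega_2) = \Omega_1 \Join \Omega_2$ and $\Omega_1 \setminus (\Omega_1 \Join \Omega_2) = \Omega_1 \setminus \Omega_2$. Your treatment of the anti-join half is correct and complete, and is in fact more careful than the paper, which does not argue it at all.

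However, the matched half---which you rightly single out as the main obstacle---cannot be closed at the stated level of generality, because the identity $\Omega_1 \Join (\Omega_1 \Join \Omega_2) = \Omega_1 \Join \Omega_2$ is false for arbitrary graph patterns. Take $D=\{\langle a,p,b\rangle, \langle c,q,b\rangle, \langle e,r,f\rangle\}$, $P_1 = (\langle ?x,p,b\rangle \ \texttt{UNION} \ \langle ?y,q,b\rangle)$ and $P_2 = \langle ?z,r,f\rangle$. Then $\Omega_1 = \{\{?x\mapsto a\},\{?y\mapsto c\}\}$ contains two \emph{distinct but compatible} mappings (their domains are disjoint), so $\Omega_1 \Join (\Omega_1 \Join \Omega_2)$ contains the cross-union $\{?x\mapsto a, ?y\mapsto c, ?z\mapsto e\}$ (with multiplicity two), which is absent from $\Omega_1 \Join \Omega_2$; since both anti-join parts are empty here, the two sides of the theorem genuinely differ---even under set semantics. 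This breaks precisely the absorption step you sketch: your lemma that ``distinct mappings of $\Omega_1$ sharing a common domain are pairwise incompatible'' is true, but it only applies when \emph{all} mappings in $\Omega_1$ have the same domain. That holds when $P_1$ is a BGP, where $dom(\mu)=var(P_1)$ for every result mapping and the result bag is duplicate-free (which also settles your multiplicity bookkeeping: each $\mu_1\cup\mu_2$ re-joins with exactly the one copy of its own $\mu_1$), but it fails as soon as $P_1$ contains \texttt{UNION} or \texttt{OPTIONAL}. So your argument is completable exactly in the BGP case---which is the only case the paper actually uses, since Definitions \ref{def:merge} and \ref{def:inject} require $P_1$ to be a BGP node---but not for ``any graph pattern $P_1$'' as Theorem \ref{thm:optional_eq} claims. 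Notably, the paper's own proof silently asserts the failing identity, so the obstacle you located is a genuine gap in the paper as well; a rigorous version should either restrict $P_1$ to BGPs or add the hypothesis that $[\![P_1]\!]_D$ is duplicate-free with all mappings sharing a common domain.
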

These two equivalences correspond to two semantics-preserving transformations on the BE-tree: that of \emph{merging} a node with the child nodes of its sibling \texttt{UNION} node, and that of \emph{injecting} a node into the child node of its sibling \texttt{OPTIONAL} node. We define these transformations as follows.

\begin{proof}
Similarly, we have
    {\small \begin{equation}
    \begin{split}
        & [\![P_1 \ \texttt{OPTIONAL} \ (P_1 \ \texttt{AND} \ P_2)]\!]_D \\
        = \ & ([\![P_1]\!]_D \Join [\![P_1 \ \texttt{AND} \ P_2]\!]_D) \cup_{bag} ([\![P_1]\!]_D \setminus [\![P_1 \ \texttt{AND} \ P_2]\!]_D) \\
        = \ & ([\![P_1]\!]_D \Join ([\![P_1]\!]_D \Join [\![P_2]\!]_D)) \\ 
        & \cup_{bag} ([\![P_1]\!]_D \setminus ([\![P_1]\!]_D \Join [\![P_2]\!]_D)) \\
        = \ & ([\![P_1]\!]_D \Join [\![P_2]\!]_D) \cup_{bag} ([\![P_1]\!]_D \setminus [\![P_2]\!]_D) \\
        = \ & [\![P_1 \ \texttt{OPTIONAL} \ P_2]\!]_D. \nonumber
    \end{split}
    \end{equation}}
\end{proof}


\begin{definition}[Merge transformation] 
\label{def:merge}
A \underline{merge} transformation is the action performed on a node, which represents the graph pattern $P_1$, and one of its sibling \texttt{UNION} nodes, the child nodes of which represents the group graph patterns $P_2, P_3, \cdots, P_n$, when both of the following conditions are met:
\begin{enumerate}
    \item $P_1$ is a BGP node;
    \item At least one of the group graph patterns in $P_2, P_3, \cdots, P_n$ is the parent node of a BGP node that is coalescable with $P_1$.
\end{enumerate}

The action consists of the following steps:
\begin{enumerate}
    \item Insert $P_1$ 
    as the leftmost child node of $P_2, P_3, \cdots, P_n$;
    \item Coalesce $P_1$ 
    with the other BGP child nodes if possible, until all the BGP nodes are maximal;
    \item Remove $P_1$ from its original position.
\end{enumerate}
\end{definition}

\begin{definition}[Inject transformation] 
\label{def:inject}
An \underline{inject} transformation is the action performed on a node, which represents the graph pattern $P_1$, and one of its sibling \texttt{OPTIONAL} nodes to its right, the child node of which represents the group graph pattern $P_2$, when both of the following conditions are met:
\begin{enumerate}
    \item $P_1$ is a BGP node;
    \item $P_2$ is the parent node of a BGP node that is coalescable with $P_1$.
\end{enumerate}

The action consists of the following steps:
\begin{enumerate}
    \item Insert $P_1$ 
    as the leftmost child node of $P_2$;
    \item Coalesce $P_1$ 
    with the other BGP child nodes if possible, until all the BGP child nodes are maximal.
\end{enumerate}
\end{definition}


\nop{
\begin{definition}[Sub-BGP] 
\label{def:sub-bgp}
A BGP $b'$ formed from one or more triple patterns of the BGP $b$ is called a \underline{sub-BGP} of $b$. Note that each of these triple patterns must be coalescable with one of the others by Definition \ref{def:bgp}.
\end{definition}
}



\begin{figure}[ht]
	\centering
	\includegraphics[scale=0.5]{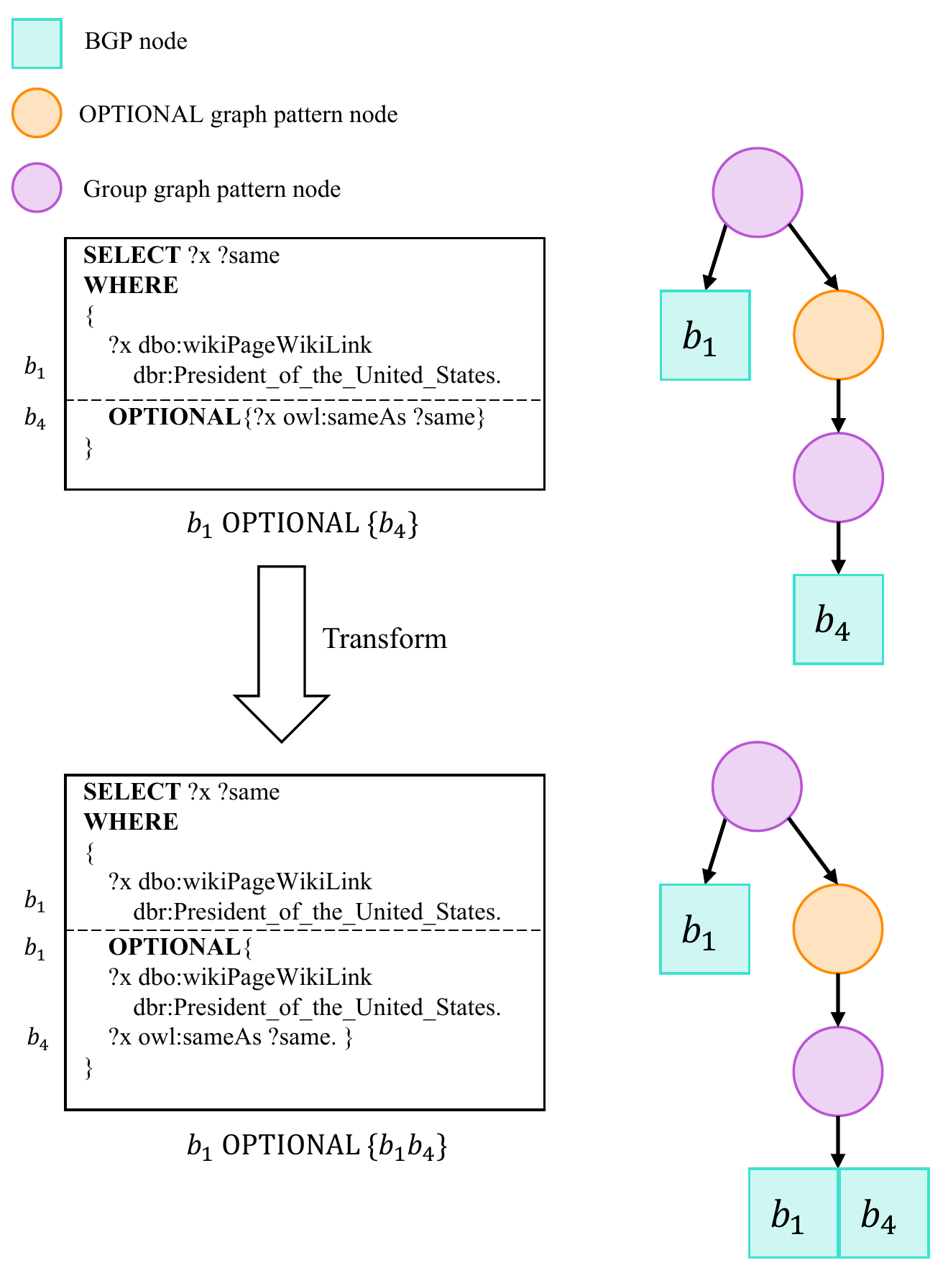}
	\vspace{-0.1in}
	\caption{Favorable \emph{Inject} Transformation}
	\label{fig:example_optional}
	\vspace{-0.1in}
\end{figure}

\begin{figure}[ht]
	\centering
	\includegraphics[scale=0.5]{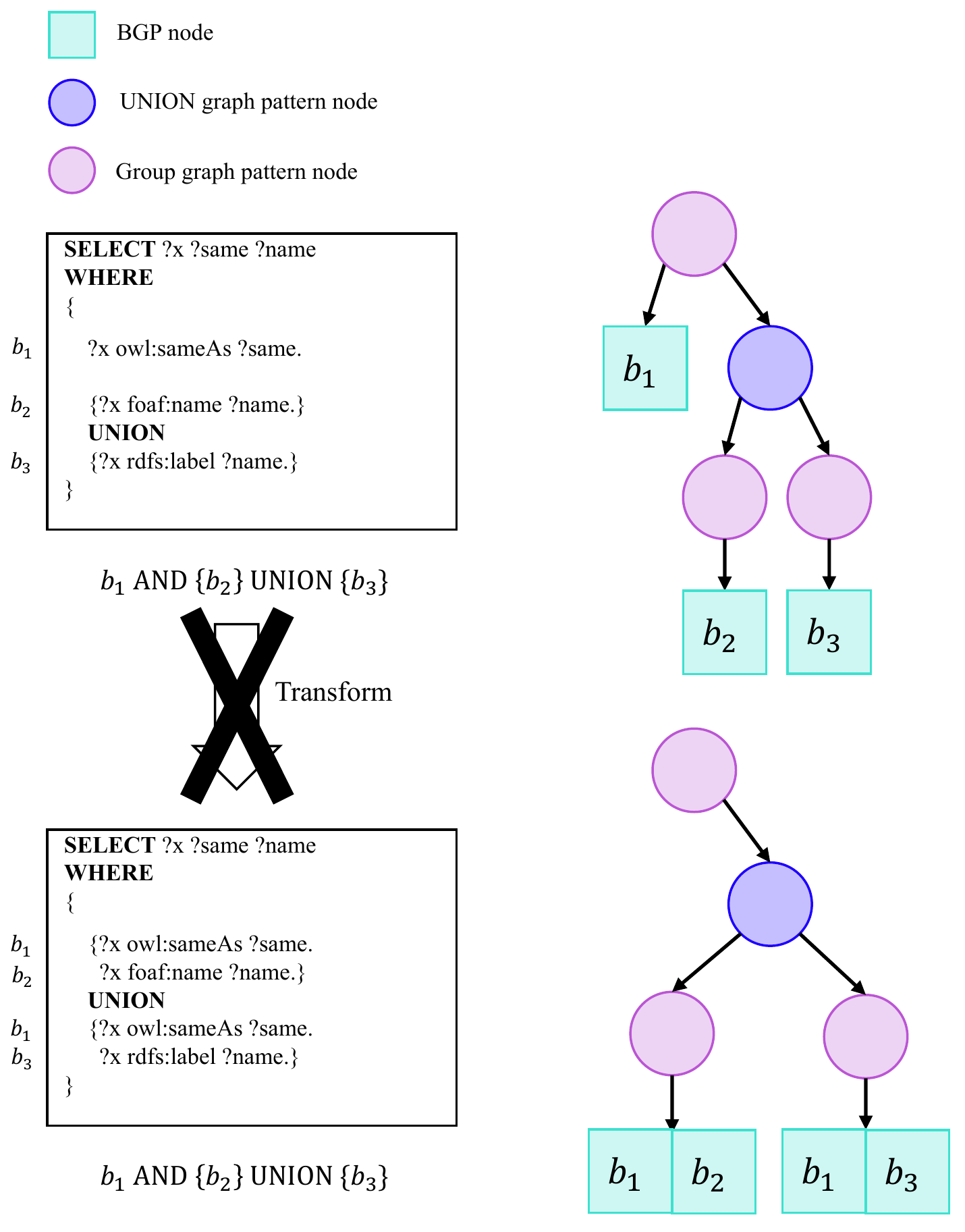}
	\vspace{-0.1in}
	\caption{Unfavorable \emph{Merge} Transformation}
	\label{fig:example_union}
	\vspace{-0.1in}
\end{figure}

Figures \ref{fig:example_optional} and \ref{fig:example_union} are examples of these two types of transformations in action. The graph database targeted by the queries in these figures is DBpedia, which is an encyclopedic open-domain knowledge graph containing information about a vast number of real-world entities. Via these examples, we give a qualitative overview of the effects of these transformations on the plan's efficiency.

In Figure \ref{fig:example_optional}, $b_4$ is a grandchild BGP node of the \texttt{OPTIONAL} node to the right of $b_1$, and $b_1$ and $b_4$ are coalescable. Therefore, the available \emph{inject} transformation will coalesce $b_4$ with $b_1$, which can help improve efficiency. According to the original BE-tree, $b_4$ is directly evaluated, and the results are left-outer-joined with those of $b_1$. Since a large number of entities have the \texttt{?sameAs} relation, which denotes the equivalence between references to the same real-world object, $b_4$ has many matches, causing both its evaluation and the left-outer-join to be costly. However, presidents of the United States is a minority of the entities, making $b_1$ highly selective, which the \emph{inject} transformation takes advantage of. After the \emph{inject}, we can rely on the underlying evaluation engine to efficiently evaluate the coalesced $b_1 b_4$ by choosing a join order that evaluates the much more selective $b_1$ first. The left-outer-join is also rendered less expensive due to the decrease in the number of results of $b_1 b_4$ compared with $b_4$.

This example also helps explain the reason why certain conditions need to be met in Definitions \ref{def:merge} and \ref{def:inject}. It is observable that only by coalescing BGPs is it possible to accelerate BGP evaluation. If no coalescing happens, the repetitive evaluation of the merged or injected BGP will instead incur extra overhead.

However, not all available transformations can help improve efficiency. Figure \ref{fig:example_union} shows an available \emph{merge} transformation on an example \texttt{UNION} query, which merges the BGP $b_1$ with its sibling \texttt{UNION} node. Since $b_1$ has low selectivity, merging it does not accelerate BGP evaluation or reduce the number of intermediate results, and even incurs extra overhead because it now has to be evaluated twice.


\section{Cost-Driven Plan Selection}\label{sec:plan_selection}

In the previous section, we have established that there are differences in terms of efficiency among different semantics-preserving BE-tree transformations. This is the classical cost-based query plan selection problem. In this section, we introduce the cost model for evaluating a \emph{merge} or \emph{inject} transformation, and the algorithm based on it that decides the transformations to be performed given an original BE-tree.

Our cost model handles the BE-tree, and thus operates on a higher level than BGP evaluation. Nevertheless, our cost model still relies on estimations of the evaluation costs and result sizes of BGPs, which are obtainable as long as the workings of the underlying BGP evaluation engine are transparent. In addition, such estimations can often be directly obtained from the plan generation module of the underlying BGP evaluation engines \cite{DBLP:journals/pvldb/MhedhbiS19}. For completeness, we briefly introduce the BGP cost model of gStore \cite{DBLP:journals/pvldb/ZouMCOZ11} and Jena \cite{Wilkinson:Jena2}, the two systems on which we implement our approach for  experimentation in Section \ref{subsubsec:cost_bgp}.

\subsection{Cost Models}

\subsubsection{Cost Model for SPARQL-UO}\label{subsubsec:cost_uo}

The basic idea of our cost model is the insight drawn from the previous examples (Figures \ref{fig:example_optional} and \ref{fig:example_union}): SPARQL-UO query execution cost is made up of two main components:  the cost of evaluating BGPs and the cost of combining partial results through \texttt{UNION}, \texttt{OPTIONAL}, or implicit \texttt{AND} operations. We are primarily concerned with the cost difference caused by a transformation, which we call \emph{$\Delta$-cost}. A transformation is expected to improve efficiency only when its $\Delta$-cost is negative, indicating a decrease in cost; naturally we are looking for the transformation with the most negative $\Delta$-cost.




\begin{figure}[ht]
	\centering
	\includegraphics[scale=0.45]{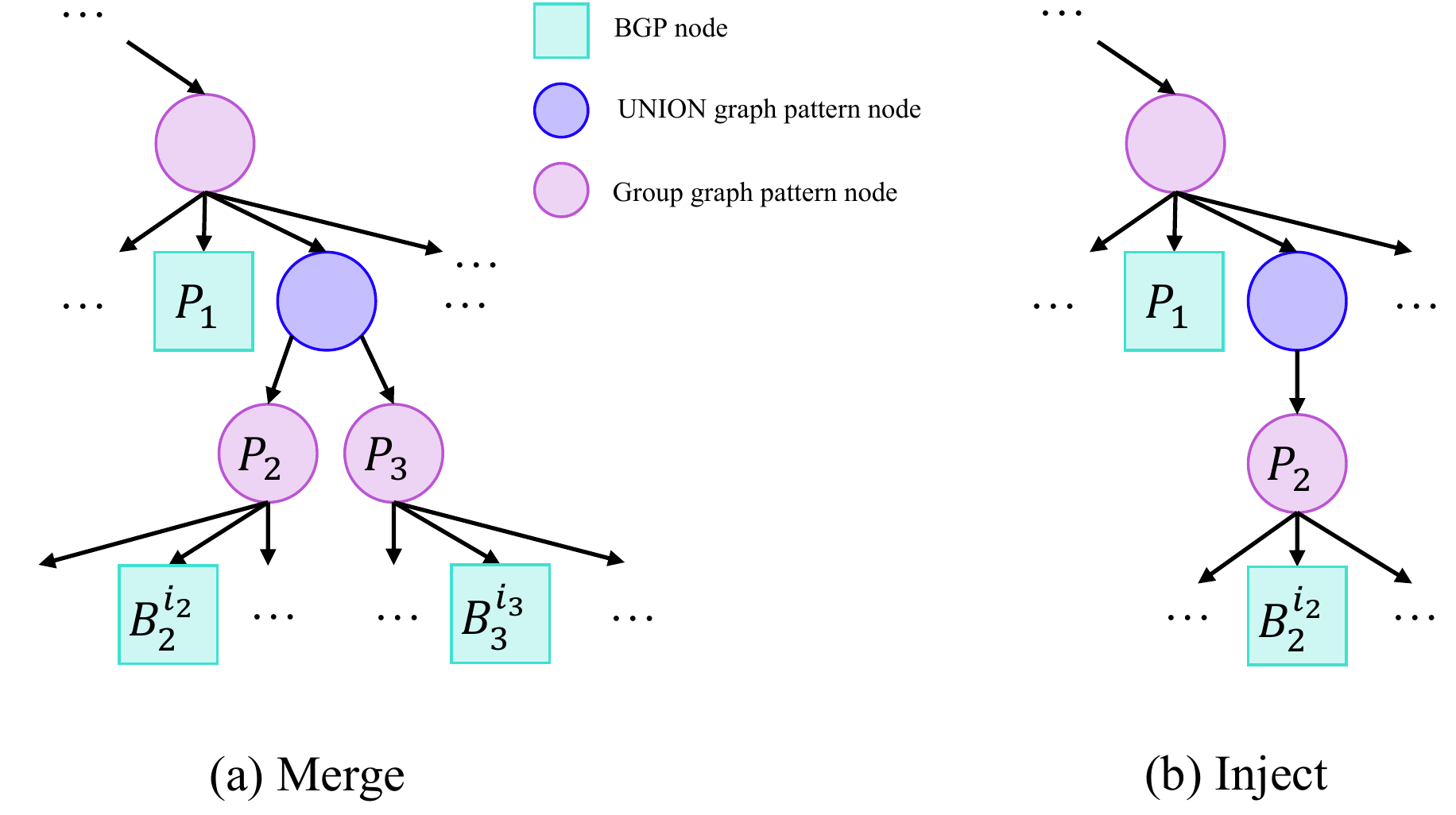}
	\vspace{-0.1in}
	\caption{Estimating the $\Delta$-cost for the \emph{merge} transformation}
	\label{fig:union_cost}
	\vspace{-0.1in}
\end{figure}

In the following, we discuss how to estimate the $\Delta$-cost. Consider a part of BE-tree shown in Figure \ref{fig:union_cost} that contains a \texttt{UNION} node with two  children group graph pattern nodes, $P_2$ and $P_3$ that have  BGP child nodes, $B_2^{i_2}$ and $B_3^{i_3}$, respectively. Assume that  $B_2^{i_2}$ and $B_3^{i_3}$ are coalescable with $P_1$. According to condition (2) in Definition \ref{def:merge}, at most one of $P_2$ and $P_3$ lacks a coalescable BGP child node, which is represented by $B_2^{i_2}$ or $B_3^{i_3}$ as an empty node.


A \emph{merge} transformation only affects a BGP node ($P_1$) and the BGP child nodes of its sibling \texttt{UNION} node ($B_2^{i_2}$ and $B_3^{i_3}$). After the transformation, the constituent triple patterns of these BGP nodes may change, but the occurrence of these nodes are maintained (for empty BGP nodes resulting from transformations are retained). Also, since the transformation preserves the query semantics, the evaluation results of $P_1$'s parent node will not change. Therefore, the cost difference caused by \emph{merge} is local to these nodes and their siblings, as shown in Figure \ref{fig:union_cost}(a). The local cost of the BE-tree before the \emph{merge} transformation $t_m$ can then be estimated as follows, where $l(\cdot)$ and $r(\cdot)$ denote all the sibling nodes to the left and right of the node $\cdot$, respectively:

{\small
\begin{align}
    cost(t_m) &= cost(t_m, BGP) + cost(t_m, algebra) \label{eq:cost_overall} \\
    cost(t_m, BGP) &= cost({P_1}) + cost({B_2^{i_2}}) + cost({B_3^{i_3}}) \label{eq:union_cost_BGP} \\
    cost(t_m, algebra) &= f_{AND}(|res({P_1})|, |res(l({P_1}))|, |res(r({P_1}))|) \nonumber \\
    &+ f_{AND}(|res({B_2^{i_2}})|, |res(l({B_2^{i_2}}))|, |res(r({B_2^{i_2}}))|) \nonumber \\
    &+ f_{AND}(|res({B_3^{i_3}})|, |res(l({B_3^{i_3}}))|, |res(r({B_3^{i_3}}))|) \nonumber \\
    &+ f_{UNION}(|res(P_2)|, |res(P_3)|) \label{eq:union_cost_algebra}
\end{align}}


$cost(t_m, BGP)$ can be directly obtained according to the BGP evaluation engine. $cost(t_m, algebra)$ is due to the possible change in the result sizes of the affected BGP nodes. In the case of \emph{merge}, $cost(t_m, algebra)$ consists of the cost of performing implicit \texttt{AND} between the affected BGP nodes and their left and right siblings, and of performing \texttt{UNION} on $P_2$ and $P_3$. The costs of algebraic operations are functions on the result sizes of their operands ($f_{AND}$ and $f_{UNION}$ in Equation \ref{eq:union_cost_algebra}). These functions may differ based on different implementations of these algebraic operations. In our experiments, to fit the system we choose to build our implementation upon, $f_{AND}$ is set to be the product of its arguments, and $f_{UNION}$ is set to be the sum of its arguments.

Note that we need to also estimate the result sizes of some nodes for $\Delta$-cost estimation. A BGP node's result size can be estimated by invoking or simulating an estimation module of the underlying BGP evaluation engine. The result sizes of other types of nodes need to be estimated based on an assumed distribution of data. In our experiments, we simply estimate the result size of any join (including \texttt{AND} and \texttt{OPTIONAL}) to be the product of the result sizes of the joined graph patterns, and the result size of \texttt{UNION} to be the sum of the result sizes of the \texttt{UNION}'ed graph patterns.

Suppose after the \emph{merge} transformation, the affected nodes are turned into ${P_1}^\prime$, ${B_2^{i_2}}^{\prime}$ and ${B_3^{i_3}}^\prime$. To estimate the local cost after $t_m$ (denoted as $cost(t_{m}^\prime)$), we simply replace $P_1$, $B_2^{i_2}$ and $B_3^{i_3}$ in Equations \ref{eq:union_cost_BGP} and \ref{eq:union_cost_algebra} by ${P_1}'$, ${B_2^{i_2}}'$ and ${B_3^{i_3}}'$.
Consequently, the $\Delta$-cost of \emph{merge} can be estimated as follows:

{\small
\begin{align}
\Delta cost(t_m) = cost(t_{m}^\prime) - cost(t_m)
\end{align}}


\nop{
\begin{figure}[ht]
	\centering
	\includegraphics[scale=0.65]{new_figure/optional_cost.pdf}
	\vspace{-0.1in}
	\caption{Estimating the $\Delta$cost for the \emph{inject} transformation}
	\label{fig:optional_cost}
	\vspace{-0.3in}
\end{figure}
}
The case is similar for the \emph{inject} operation (Figure \ref{fig:union_cost}(b)). The local cost of the BE-tree before the \emph{inject} transformation $t_i$ can then be estimated as follows:

{\small
\begin{align}
      cost(t_i) &= cost(t_i, BGP) + cost(t_i, algebra) \label{eq:optional_cost_overall} \\
    cost(t_i, BGP) &= cost({P_1}) + cost({B_2^{i_2}}) \label{eq:optional_cost_BGP} \\
    cost(t_i, algebra) &= f_{AND}(|res({P_1})|, |res(l({P_1}))|, |res(r({P_1}))|) \nonumber \\
    &+ f_{AND}(|res({B_2^{i_2}})|, |res(l({B_2^{i_2}}))|, |res(r({B_2^{i_2}}))|) \nonumber \\
    &+ f_{OPTIONAL}(|res({P_1})|, |res(P_2)|) \label{eq:optional_cost_algebra}
\end{align}}

Suppose after the \emph{inject} transformation, the affected nodes are turned into ${P_1}^\prime$ and ${B_2^{i_2}}^\prime$. To estimate the local cost after $t_i$, we simply replace $P_1$ and $B_2^{i_2}$ in Equations \ref{eq:optional_cost_BGP} and \ref{eq:optional_cost_algebra} by ${P_1}^\prime$ and ${B_2^{i_2}}^\prime$.
The $\Delta$-cost of \emph{inject} is then computed as follows:

{\small
\begin{align}
    \Delta cost(t_i) = cost(t_{i}^\prime) - cost(t_i)
\end{align}}

\subsubsection{Cost Model for BGP}\label{subsubsec:cost_bgp} Although the underlying BGP cost model is transparent to our SPARQL-UO cost model (see Equations \ref{eq:union_cost_BGP} and \ref{eq:optional_cost_BGP}), for the completeness of exposition, we briefly introduce the BGP cost models employed by gStore and Jena. The evaluation of BGPs consists of joins. Thus the cost of a BGP plan $T$ is the sum of the costs of each executed join operation $j$:
{\small
\begin{align}
    cost(T) = \sum_{j \in T} cost(j) \nonumber
\end{align}}
BGP evaluation in gStore uses the worst-case optimal (WCO) join, which  is concerned with all the edges labeled with the required predicate that links existing query vertices and the newly extended vertex. For each result tuple on the existing vertices, all such edges need to be scanned at least once to check whether this tuple can be extended to match the newly extended vertex. Suppose the set of existing vertices is $\{v_1, \cdots, v_{k-1}\}$, and the newly extended vertex is $v_k$. The cost of a WCO join can then be estimated as follows:
{\small
\begin{align}
    & cost(WCOJoin(\{v_1, \cdots, v_{k-1}\}, v_k)) \nonumber \\
    & = card(\{v_1, \cdots, v_{k-1}\}) \times \min_{i \in [1, k-1]} average\_size(v_i, p) \nonumber
\end{align}}
where $card(\{v_1, \cdots, v_{k-1}\})$ indicates the estimated \emph{cardinality} -- the estimated number of result tuples on the query vertex set $\{v_1, \cdots, v_{k-1}\}$; and $average\_size(v_i, p)$ indicates the average number of edges (\emph{i.e.,} triples) with $p$ as predicate and $v_i$ as subject or object, depending on the direction of the edge between $v_i$ and $v_k$ in the query.

On the other hand, BGP evaluation in Jena uses the binary join, which is conceptually akin to a hash-join in relational databases. It first hashes the result tuples of the BGP with a smaller result size on the common vertices. Then, for each result tuple of the other BGP, the hash index is probed to find compatible matches that can be combined. Suppose the two BGPs to be combined have query vertex sets $V_1$ and $V_2$, respectively. The cost of a binary join can then be estimated as follows:
{\small
\begin{align}
    & cost(BinaryJoin(V_1, V_2)) \\
    & = 2 \times \min (card(V_1), card(V_2)) + \max (card(V_1), card(V_2)) \nonumber
\end{align}}
where the first part of the sum indicates the cost of building the hash index, and the second part indicates the cost of probing it.

The above cost estimation formulas rely on the cardinality estimation of query vertex sets. Cardinality estimation starts from single triple patterns, whose query vertex set’s exact cardinality can be obtained reading the pre-built indexes of the RDF store using the constants as key. Each time that a new query vertex is added to the set, we sample the candidate result set, and collate how many result tuples can be generated from the sample by extending to the new query vertex. The estimated cardinality is updated by scaling up based on the previous estimation in proportion to the ratio between the number of extended result tuples and the sample size:
{\small
\begin{align}
    card(V_k) = \max (\frac{\#extend}{\#sample} \times card(V_{k-1}), 1) \nonumber
\end{align}}

Note that more sophisticated cardinality estimation approaches and BGP cost models (such as \cite{10.1145/3178876.3186003}) are orthogonal to our contribution in this paper. Experimental results show that our approach optimize SPARQL-UO query processing significantly by considering the simple but effective BGP cost models and cardinality estimation methods shown above. 



\subsection{Cost-Driven Transformation}

In this subsection, we discuss BE-tree transformation algorithms that leverage the cost model discussed above to decide on transformations for obtaining the most efficient query plan for execution.


\subsubsection{Transforming a Single Tree Level}


We first concentrate on the simpler case where  only  transformations at a single level are considered.



When a BGP node only has a sibling \texttt{UNION} or \texttt{OPTIONAL} node, deciding the transformations is already covered by the cost model introduced in Section \ref{subsubsec:cost_uo}. However, in reality, multiple sibling \texttt{UNION} or \texttt{OPTIONAL} nodes may be viable for transformation. Note that according to Theorems \ref{thm:union_eq} and \ref{thm:optional_eq}, a merged BGP is removed from its original position, while an injected BGP maintains its original occurrence. This means that a BGP can only be merged with one of its sibling \texttt{UNION} nodes, but can be injected into multiple sibling \texttt{OPTIONAL} nodes. Therefore, in order to decide on a \emph{merge} transformation, we need to look holistically at all the \texttt{UNION} nodes at that level, and choose the transformation that incurs the lowest $\Delta$cost. On the other hand, \emph{inject} transformations are mutually independent, so we scan over each \texttt{OPTIONAL} node to the right of the BGP node, and decide individually which ones are worthy of a transformation based on the $\Delta-$cost.

The transformation decision at a single level of the BE-tree is given in Algorithm \ref{alg:single_level}. The transformation happens at the level of the children of the input group graph pattern node $P$. Note that the \emph{merge} transformation of a BGP node can only be determined and performed after iterating over all its sibling UNION nodes (Line 14), while the \emph{inject} operation is decided individually on each sibling \texttt{OPTIONAL} node (Line 16). The subroutines that compute the $\Delta$-cost of each possible transformation is presented in Algorithm \ref{alg:subroutines}.

\begin{algorithm}[ht]\small
	\caption{\small Single-level BE-tree transformation}
	\label{alg:single_level}
	\SetKwInOut{KwIn}{Input}
	\KwIn{RDF dataset $D$, BE-tree $T(Q)$, a group graph pattern node $P$}
	
	\SetKwFunction{FMain}{SingleLevelTransform}
	\SetKwFunction{FMerge}{DecideMerge}
	\SetKwFunction{FInject}{DecideInject}
	\SetKwProg{Fn}{Function}{:}{}
	\SetKw{Continue}{continue}
	\Fn{\FMain{$D, T(Q), P$}}
	{
		\ForEach{$P_1$ in the child nodes of $P$}
		{
		    \If {$P_1$ is a BGP node}
		    {
		        minUnionCost $\leftarrow 0$\;
		        targetUNION $\leftarrow$ empty node\;
		        \ForEach{\texttt{UNION} node $u$ in the child nodes of $P$}
		        {
		            minUnionCostCur $\leftarrow$ \FMerge($P_1$, $u$)\;
		            \If{minUnionCostCur $<$ minUnionCost}
		            {
		                minUnionCost $\leftarrow$ minUnionCostCur\;
		                targetUNION $\leftarrow$ $U$\;
		            }
		        }
		        \If{minUnionCost $< 0$}
		        { Perform \emph{merge} on subBGPglobal and targetUNION }
		        \ForEach{\texttt{OPTIONAL} node $o$ to the right of $P_1$ in the child nodes of $P$}
		        { \FInject($P_1$, $O$)\; }
		    }
		}
	}
\end{algorithm}
	
\begin{algorithm}[ht]\small
	\caption{\small Subroutines for BE-tree transformation}
	\label{alg:subroutines}
	\SetKwInOut{KwIn}{Input}
	
	\SetKwFunction{FMain}{SingleLevelTransform}
	\SetKwFunction{FMerge}{DecideMerge}
	\SetKwFunction{FInject}{DecideInject}
	\SetKwProg{Fn}{Function}{:}{}
	\SetKw{Continue}{continue}
	\Fn{\FMerge{$P_1, U$}}
	{
	    \nop{\If{maximal common sub-BGP of $U$ is coalescable with $P_1$}
	    {
	        Add the maximal common sub-BGP to $P_1$\;
	        Subtract the maximal common sub-BGP from $U$\;
	    }}
	    \If{constraints are violated}{\Return 0\;}
	    originalCost $\leftarrow$ local cost (Equations \ref{eq:cost_overall}, \ref{eq:union_cost_BGP} and \ref{eq:union_cost_algebra})\;
	    minUnionCostCur $\leftarrow$ 0\;
	        \ForEach{child group graph pattern node $P_j$ of $U$}
	        {
	            $BSet_j \leftarrow \{B_j^i | B_j^i$ is a BGP child node of $P_j$ coalescable with $P_1 \}$\;
	            \If{$BSet_j = \emptyset$}
	            { Add an empty BGP node to $BSet_j$\; }
	        }
	        \ForEach{tuple $(B_2^{i_2}, B_3^{i_3}, \cdots)$ drawn from $BSet_2, BSet_3, \cdots$}
	        {
	            Perform \emph{merge} on $P_1$ and $U$\;
	            transformedCost $\leftarrow$ local cost (Equations \ref{eq:cost_overall}, \ref{eq:union_cost_BGP} and \ref{eq:union_cost_algebra})\;
	            $\Delta$cost $\leftarrow$ transformedCost - originalCost\;
	            \If{$\Delta$cost $<$ minUnionCostCur}
	            {
	                minUnionCostCur $\leftarrow$ $\Delta$cost\;
	            }
	            Undo \emph{merge}\;
	        }
	    \Return minUnionCostCur\;
	}
	\Fn{\FInject{$P_1, O$}}
	{
	    \If{constraints are violated}{\Return\;}
	    originalCost $\leftarrow$ local cost (Equations \ref{eq:cost_overall}, \ref{eq:optional_cost_BGP} and \ref{eq:optional_cost_algebra})\;
	        \ForEach{BGP child node $B_2^i$ of $O$'s child group graph pattern node $P_2$ coalescable with $P_1$}
	        {
	            Perform \emph{inject} on $P_1$ and $U$ (coalescing $sub$ with $B_2^i$)\;
	            transformedCost $\leftarrow$ local cost (Equations \ref{eq:cost_overall}, \ref{eq:optional_cost_BGP} and \ref{eq:optional_cost_algebra})\;
	            $\Delta$cost $\leftarrow$ transformedCost - originalCost\;
	            \If{$\Delta$cost $\geq$ 0}
	            {
	              Undo \emph{inject}\;  
	            }
	        }
	}
\end{algorithm}

\subsubsection{Handling Multiple Levels}


Handling the entire BE-tree, which often consists of multiple levels is particularly challenging because of the possible interdependence between transformations across different levels. For example, if we consider transforming the group graph pattern $\{P_1 \ \texttt{OPTIONAL} \ \{P_2 \ \texttt{OPTIONAL} \ P_3\}\}$ ($P_1$, $P_2$ and $P_3$ are all coalescable BGPs), 
there are $2^3$ possible transformations involving whether $P_1$ is injected into $P_2$, whether $P_2$ is injected into $P_3$, and whether $P_1$ is injected into $P_3$. This results in a plan space that 
is exponential in terms of the depth of the BE-tree. In fact, we conjecture that finding the optimal transformation on the entire BE-tree is an NP-hard combinatorial optimization problem.

In order to balance the time complexity and the efficiency of the transformed tree, we propose a greedy strategy to decide on the transformations on the entire BE-tree (Algorithm \ref{alg:multi_level}). Specifically, we traverse the BE-tree in a post-order depth-first fashion. Only when all the child nodes of a group graph pattern node have been traversed (Lines 4-12) do we consider the possible transformations on the level of its children (Line 13, which invokes Algorithm \ref{alg:single_level}). In this way, we ensure that all the lower levels have been appropriately transformed before considering transforming the current level, and the entire transformed tree is guaranteed to be more efficient than the original without expensive backtracking.

\begin{algorithm}[ht]\small
	\caption{\small Multi-level BE-tree transformation}
	\label{alg:multi_level}
	\SetKwInOut{KwIn}{Input}
	\KwIn{RDF dataset $D$, BE-tree $T(Q)$}
	
	\SetKwFunction{FMain}{MultiLevelTransform}
	\SetKwFunction{FSingle}{PostOrderTraverse}
	\SetKwFunction{FPrev}{SingleLevelTransform}
	\SetKwProg{Fn}{Function}{:}{}
	\SetKw{Continue}{continue}
	\Fn{\FMain{$D, T(Q)$}}
	{
		\FSingle($D, T(Q), Q$)\;
	}
	\Fn{\FSingle{$D, T(Q), P$}}
	{
		\ForEach{$P_1$ in the child nodes of $P$}
		{
		    \If{$P_1$ is a group graph pattern node}
		    { \FSingle($D, T(Q), P_1$)\; }
		    \ElseIf{$P_1$ is a \texttt{UNION} node}
		    {
		        \ForEach{child group graph pattern node $P_i$ of $P_1$}
		        { \FSingle($D, T(Q), P_i$)\; }
		    }
		    \ElseIf{$P_1$ is an \texttt{OPTIONAL} node}
		    {
		        Get the child group graph pattern node $P_2$ of $P_1$\;
		        \FSingle($D, T(Q), P_2$)\;
		    }
		}
		\FPrev($D, T(Q), P$)\;
	}
\end{algorithm}


After applying the transformations, the BE-tree still maintains the tree structure and has the same node types. The semantic correctness of the transformed BE-tree is guaranteed by Theorems \ref{thm:union_eq} and \ref{thm:optional_eq}. Therefore, the evaluation algorithm (Algorithm \ref{alg:bgpe}) can still be invoked to evaluate the transformed BE-tree.


\section{Query-Time Optimization: Candidate Pruning}\label{sec:cand_pruning}

In the previous section, we introduced how to generate different SPARQL-UO query plans by BE-tree transformations and select an effective plan based on the cost estimation prior to execution. In this section, we present \emph{candidate pruning}, a query-time optimization incorporated into Algorithm \ref{alg:bgpe} to enhance efficiency.

The basic idea of candidate pruning is also drawn from Theorems \ref{thm:union_eq} and \ref{thm:optional_eq}. The equivalence between the evaluation results implies that the results of the \texttt{UNION}'ed or \texttt{OPTIONAL}-right group graph patterns are constrained by those of the outer graph pattern regarding the common variables.
\nop{A similar constraint exists for nested group graph patterns, according to the following theorem (the proof is obvious due to Definition \ref{def:eval}):

\begin{theorem}
	For any graph pattern $P_1$ and $P_2$ and any \\ RDF dataset $D$, we have
    \begin{equation}
        [\![P_1 \ \texttt{AND} \ \{P_2\}]\!]_D = [\![\{P_1 \ \texttt{AND} \ P_2\}]\!]_D. \nonumber
    \end{equation}
\label{thm:nested_eq}
\end{theorem}}
Therefore, when a \texttt{UNION}, \texttt{OPTIONAL} or group graph pattern node is encountered during evaluation, we can set the current results on the common variables as \emph{candidate results} when executing the child BGPs of that node. Figure \ref{fig:candidate_pruning} shows the mechanism of candidate pruning for an \texttt{OPTIONAL} query: the results of the variable \texttt{?x} from the already evaluated graph patterns serve as the candidate results of \texttt{?x} for the child BGP of the \texttt{OPTIONAL}-right group graph pattern, pruning redundant matchings of \texttt{?x} that will be materialized if the BGP is evaluated independently.

\begin{figure}
    \centering
    \includegraphics[scale=0.5]{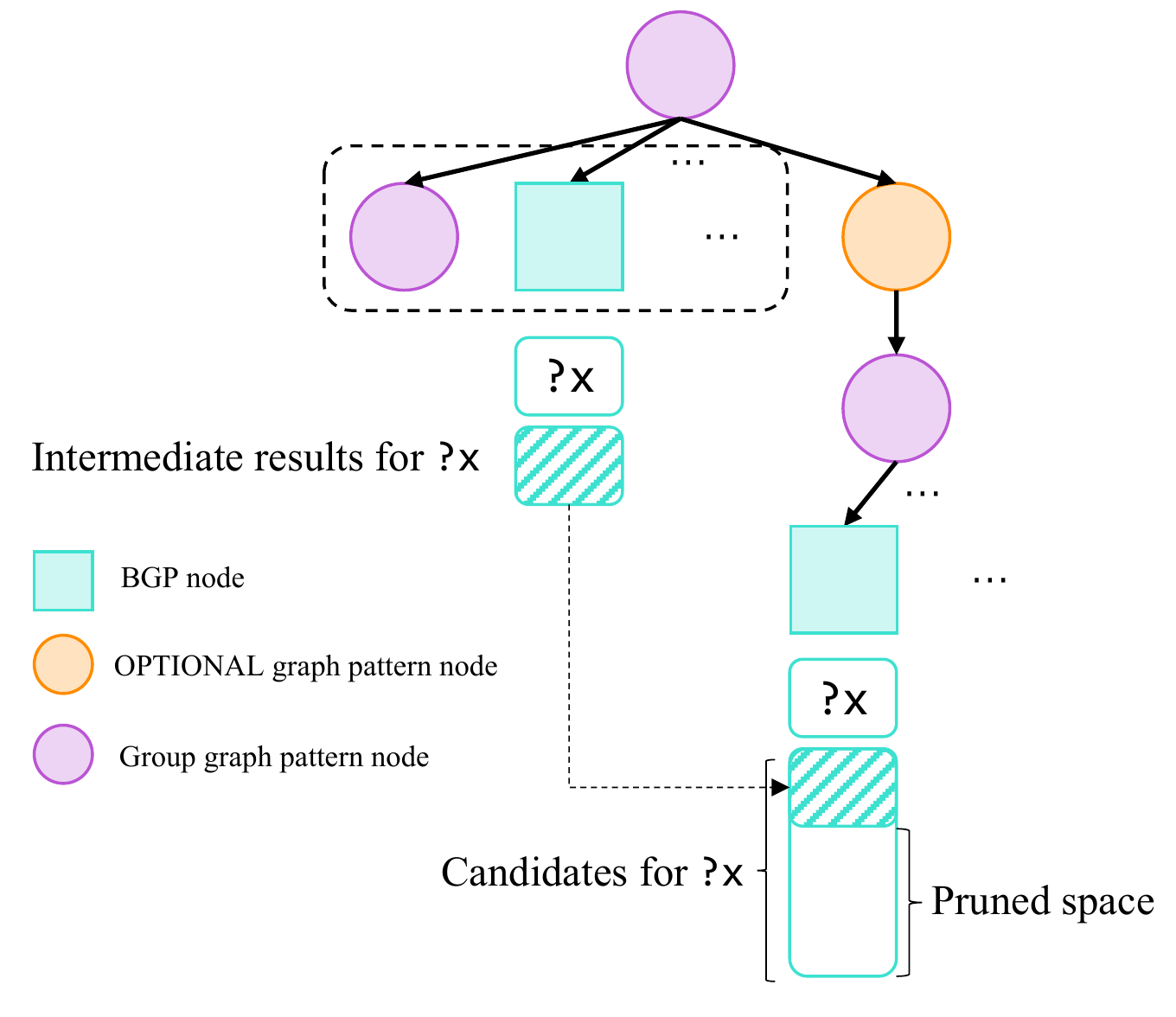}
    \caption{Candidate pruning for \texttt{OPTIONAL}}
    \label{fig:candidate_pruning}
    \vspace{-0.3in}
\end{figure}

The discussion above establishes that candidate pruning preserves semantic correctness. However, to achieve a pruning effect, we need to ensure that the size of the candidate results is
smaller than the size of the actual results of the BGP. A smaller candidate result size also reduces the overhead incurred by scanning them and setting them as candidates. We adopt an adaptive threshold on the candidate result size. The cost model for BGP (Section \ref{subsubsec:cost_bgp}) invoked as part of tree transformation provides an estimate of the actual BGP result size, which we employ as the threshold on candidate result size whenever possible. When no such estimate is available, we set the threshold based on the dataset size. (Please refer to Section \ref{Sec:experiments} for the threshold setting in our experiments.)

To implement candidate pruning, we modify Algorithm \ref{alg:bgpe} as follows (Note that the results can be passed as arguments in the form of pointers to prevent expensive copying):

\begin{itemize}
	\item Add a third argument \emph{cand}, which denotes the candidate results, to the \texttt{BGPBasedEvaluation} function;
	\item Pass the current results \emph{r} as the third argument to \\ \texttt{BGPBasedEvaluation} when processing a \texttt{UNION}, \texttt{OPTIONAL} or group graph pattern node (Lines 7, 9, 15 and 19);
	\item Pass \emph{cand} as the third argument to \texttt{EvaluateBGP} (Line 11). Only when the size of \emph{cand} is smaller than the threshold is it set as the candidate results of the BGP.
\end{itemize}

Tree transformation and candidate pruning, which take effect prior to and during query execution, respectively, are complementary to each other. Prior to execution, high-selectivity BGPs are targeted by \emph{merge} or \emph{inject} transformations, which breaks up graph patterns with large overall results that originally cannot be handled by candidate pruning. Tree transformation also supplies candidate pruning with estimates of the BGP result sizes. On the other hand, while tree transformations are constrained to be performed level-by-level due to the vast plan space, candidate pruning can transmit the pruning effect of small results across levels during execution. For example, when processing a query with the group graph pattern $\{P_1 \ \texttt{OPTIONAL} \ \{P_2 \ \texttt{OPTIONAL} \ P_3\}\}$, $P_1$ cannot be injected into $P_3$ by the greedy transformation strategy even if it is selective, but its results can serve as candidates for $P_3$ via $P_2$. In the special case where there is only a BGP node to the left of the \texttt{UNION} or \texttt{OPTIONAL} node, performing transformations on the BGP is equivalent to candidate pruning. In this case, tree transformation is skipped to evade the additional overhead.

\begin{table*}[ht]
	\centering
	\caption{Datasets Statistics}
	\footnotesize
	\begin{tabular}{|c|c|c|c|c|} 
		\hline
		Datasets & triples & entities & predicates & literals \\
		\hline
		LUBM  & 534,355,247 & 86,990,882 & 18 & 44,658,530 \\
		\hline
		DBpedia & 830,030,460 & 96,375,582 & 57,471 & 59,825,935 \\
		\hline
	\end{tabular}
	\label{tab:ds}
	\vspace{-0.1in}
\end{table*}

\begin{table*}[ht]
\begin{minipage}{9cm}	
 \centering
	\caption{Query Statistics on LUBM}
	\footnotesize
	\begin{tabular}{|c|c|c|c|c|c|}
		\hline
		&Query & Type & $Count_{BGP}$ & $Depth$ & $|[\![Q]\!]_D|$ \\ \hline
		\multirow{5}{*}{Group 1}&{q1.1} & {U} & {9} & {2} & {645,666}  \\ \cline{2-6}
		&{q1.2} & {O} & {3} & {2} & {44,653,510}  \\ \cline{2-6} 
		&{q1.3} & {O} & {4} &{4} & {76}  \\ \cline{2-6}  
		&{q1.4} & {O} & {4} &{4} & {5,583}  \\ \cline{2-6} 
		&{q1.5} & {UO} & {6} &{3} & {4,348}  \\ \cline{2-6}
		&{q1.6} & {UO} & {9} &{3} & {37}  \\ 
		\hline \hline
		\multirow{6}{*}{Group 2}&{q2.1} & {O} & {3} &{1} & {4,176,432}  \\ \cline{2-6} 
		&{q2.2} & {O} & {4} &{3} & {8,698}  \\ \cline{2-6} 
		&{q2.3} & {O} & {4} &{3} & {13,124,940}  \\ \cline{2-6}  
		&{q2.4} & {O} & {2} &{3} & {10}  \\ \cline{2-6}  
		&{q2.5} & {O} & {2} &{2} & {10}  \\ \cline{2-6}  
		&{q2.6} & {O} & {2} &{2} & {7} \\ \hline 
	\end{tabular}
	\label{tab:exp_uo_lubm}
 \end{minipage}
 \begin{minipage}{9cm}
 \centering
	\caption{Query Statistics on DBpedia}
	\footnotesize
	\begin{tabular}{|c|c|c|c|c|c|}
		\hline
		& Query & Type & $Count_{BGP}$ & $Depth$ & $|[\![Q]\!]_D|$ \\ \hline
		\multirow{5}{*}{Group 1}&{q1.1} & {U} & {6} &{2} & {153,325}  \\ \cline{2-6}
		&{q1.2} & {UO} & {4} &{3} & {610,434}  \\ \cline{2-6}  
		&{q1.3} & {O} & {5} &{5} & {1,192}  \\ \cline{2-6}  
		&{q1.4} & {UO} & {7} &{5} & {92,041}  \\ \cline{2-6}   
		&{q1.5} & {UO} & {6} &{3} & {3,699,995}  \\ \cline{2-6}
		&{q1.6} & {UO} & {10} &{4} & {176} \\ \hline \hline
		\multirow{6}{*}{Group 2}
		&{q2.1} & {O} & {5} &{3} & {490,876}  \\ \cline{2-6}  
		&{q2.2} & {O} & {2} &{2} & {55,054}  \\ \cline{2-6}
		&{q2.3} & {O} & {2} &{2} & {61,318}  \\ \cline{2-6}   
		&{q2.4} & {O} & {3} &{2} & {4,757} \\ \cline{2-6}  
		&{q2.5} & {O} & {2} &{2} & {5,330}  \\ \cline{2-6} 
		&{q2.6} & {O} & {9} &{2} & {36} \\ 
		\hline
	\end{tabular}
	\label{tab:exp_uo_dbpedia}
 \end{minipage}
\end{table*}

\nop{
\begin{table}[ht]
	\centering
	\caption{Query Statistics on DBpedia}
	\small
	\begin{tabular}{|c|c|c|c|c|c|}
		\hline
		&query & type & $Count_{BGP}(Q)$ & $Depth(Q)$ & $|[\![Q]\!]_D|$ \\ \hline
		\multirow{5}{*}{Group 1}&{q1.1} & {U} & {6} &{2} & {153,325}  \\ \cline{2-6}
		&{q1.2} & {UO} & {4} &{3} & {610,434}  \\ \cline{2-6}  
		&{q1.3} & {O} & {5} &{5} & {1,192}  \\ \cline{2-6}  
		&{q1.4} & {UO} & {7} &{5} & {92,041}  \\ \cline{2-6}   
		&{q1.5} & {UO} & {6} &{3} & {3,699,995}  \\ \cline{2-6}
		&{q1.6} & {UO} & {10} &{4} & {176} \\ \hline \hline
		\multirow{6}{*}{Group 2}
		&{q2.1} & {O} & {5} &{3} & {490,876}  \\ \cline{2-6}  
		&{q2.2} & {O} & {2} &{2} & {55,054}  \\ \cline{2-6}
		&{q2.3} & {O} & {2} &{2} & {61,318}  \\ \cline{2-6}   
		&{q2.4} & {O} & {3} &{2} & {4,757} \\ \cline{2-6}  
		&{q2.5} & {O} & {2} &{2} & {5,330}  \\ \cline{2-6} 
		&{q2.6} & {O} & {9} &{2} & {36} \\ 
		\hline
	\end{tabular}
	\label{tab:exp_uo_dbpedia}
\end{table}}

\nop{
\begin{figure}
	\centering
	\includegraphics[scale=0.5]{exp/Tab5.pdf}
	\vspace{-0.1in}
	\caption{\texttt{UNION} \& \texttt{OPTIONAL} \& \texttt{FILTER}   Performance on LUBM}
	\label{exp:uof-performance:lubm}
	\vspace{-0.1in}
\end{figure}

\begin{figure}
	\centering
	\includegraphics[scale=0.5]{exp/Tab6.pdf}
	\vspace{-0.1in}
	\caption{\texttt{UNION} \& \texttt{OPTIONAL} \& \texttt{FILTER} Performance on DBpedia}
	\label{exp:uof-performance:dbpedia}
	\vspace{-0.1in}
\end{figure}
}
\vspace{-0.2in}
\section{Experiments}\label{Sec:experiments}

To evaluate the effectiveness of our approach, we employ the BGP query engines of Jena and gStore to implement our BGP-based cost-aware SPARQL-UO evaluation strategy. We pull the latest version of Jena as of 27 June, 2022 from their GitHub repository\footnote{\url{https://github.com/apache/jena}.}. All the experiments run on Jena have enabled the statistics-based optimizations. We forked a branch from the main branch of gStore (v0.91) and implement our proposed SPARQL-UO optimizer based on it\footnote{Our implementation is available at \url{https://anonymous.4open.science/r/gStore-UO/}.}. Experiments are conducted on both synthetic (LUBM \cite{lubmurl}) and real (DBpedia\footnote{The DBpedia data dump that we use is V3.9, which is downloadable at \url{http://downloads.dbpedia.org/3.9/en/}. We use the concatenation of all the N-Triples files.}\cite{Dbpediaurl}) RDF datasets, the statistics of which are listed in Table \ref{tab:ds}. Our implementation and all the queries used in our experiments can be found in our anonymous GitHub repository\footnote{\url{https://anonymous.4open.science/r/gStore-UO-opt/}.}. We conduct experiments on a Linux server with an Intel Xeon Gold 6126 CPU @ 2.60GHz CPU and 256GB memory.



\subsection{Verification of Optimizations} \label{subsec:expr_optimizations}

In this section, we verify the effectiveness of the proposed optimizations in Section \ref{sec:transformation} and evaluate the following four approaches:
\begin{enumerate}
    \item The baseline (abbreviated as \texttt{base}), which invokes the BGP-based query evaluation method (Algorithm \ref{alg:bgpe}) on the original BE-tree, analogous to the original SPARQL-UO implementations in Jena and gStore;
    \item Tree transformation (abbreviated as \texttt{TT}), which transforms the original BE-tree by Algorithm \ref{alg:multi_level} and then invokes Algorithm \ref{alg:bgpe} on it;
    \item Candidate pruning (abbreviated as \texttt{CP}), which invokes Algorithm \ref{alg:bgpe} augmented with candidate pruning (Section \ref{sec:cand_pruning}) on the original BE-tree, using a fixed threshold of 1\% of the total number of triples in the database;
    \item The full version that coordinates tree transformation and candidate pruning (abbreviated as \texttt{full}), which transforms the original BE-tree by Algorithm \ref{alg:multi_level}, and then invokes Algorithm \ref{alg:bgpe} augmented by candidate pruning, using an adaptive threshold on the candidate result size.
\end{enumerate}

Since there is no benchmark tailored for SPARQL-UO queries to our knowledge, we construct a mini-benchmark with realistic semantics and varying complexities, containing six queries on LUBM and DBpedia, respectively, denoted as q1.1-1.6 in the following and given in Appendix A of \cite{zouSPARQLUO:22}.
Let $Q$ be the outermost group graph pattern in the query. To measure the complexity of a query, we define two metrics: (1) the BGP count ($Count_{BGP}(Q)$), and (2) the maximum depth of nested group graph patterns ($Depth(Q)$).

$Count_{BGP}(P)$ of a graph pattern $P$ is recursively defined:
\begin{enumerate}
	\item If $P$ is a BGP, $Count_{BGP}(P) = 1$.
	\item If $P = \{P_1\}$, $Count_{BGP}(P) = Count_{BGP}(P_1)$.
	\item If $P = P_1 \ \texttt{AND} \ P_2$ or $P_1 \ \texttt{UNION} \ P_2$ or $P_1 \ \texttt{OPTIONAL} \ P_2$, \\$Count_{BGP}(P) = Count_{BGP}(P_1) + Count_{BGP}(P_2)$.
\end{enumerate}

$Depth(P)$ of a graph pattern $P$ is recursively defined as follows:
\begin{enumerate}
	\item If $P$ is a BGP, $Depth(P) = 0$.
	\item If $P = \{P_1\}$, $Depth(P) = Depth(P_1) + 1$.
	\item If $P = P_1 \ \texttt{AND} \ P_2$ or $P_1 \ \texttt{UNION} \ P_2$ or $P_1 \ \texttt{OPTIONAL} \ P_2$, \\ $Depth(P) = \max (Depth(P_1),Depth(P_2))$.
\end{enumerate}

Suppose $P$ is the outermost group graph pattern of query $Q$, we have $Count_{BGP}(Q) = Count_{BGP}(P), Depth(Q) = Depth(P)$. Group 1 in Tables \ref{tab:exp_uo_lubm} and \ref{tab:exp_uo_dbpedia} summarizes the statistics and the result sizes of the queries used in this subsection. 


\nop{
\begin{figure*}
	\centering
	\begin{subfigure}[b]{0.48\textwidth}
		\centering
		\includegraphics[width=\textwidth]{exp/verify_opt_lubm_gStore.pdf}
		\caption{gStore}
		\label{exp:uo-performance:lubm:gStore}
	\end{subfigure}
	\hfill
	\begin{subfigure}[b]{0.48\textwidth}
		\centering
		\includegraphics[width=\textwidth]{exp/verify_opt_lubm_Jena.pdf}
		\caption{Jena}
		\label{exp:uo-performance:lubm:Jena}
	\end{subfigure}
	\vspace{-0.1in}
	\caption{Verification of optimizations on LUBM}
	\label{exp:uo-performance:lubm}
	\vspace{-0.1in}
\end{figure*}

\begin{figure*}
	\centering
	\begin{subfigure}[b]{0.48\textwidth}
		\centering
		\includegraphics[width=\textwidth]{exp/verify_opt_dbpedia_gStore.pdf}
		\caption{gStore}
		\label{exp:uo-performance:dbpedia:gStore}
	\end{subfigure}
	\hfill
	\begin{subfigure}[b]{0.48\textwidth}
		\centering
		\includegraphics[width=\textwidth]{exp/verify_opt_dbpedia_Jena.pdf}
		\caption{Jena}
		\label{exp:uo-performance:dbpedia:Jena}
	\end{subfigure}
	\vspace{-0.1in}
	\caption{Verification of optimizations on DBpedia}
	\label{exp:uo-performance:dbpedia}
	\vspace{-0.0in}
\end{figure*}
}

\begin{figure*}[!h]
    \centering
	\includegraphics[scale=0.4]{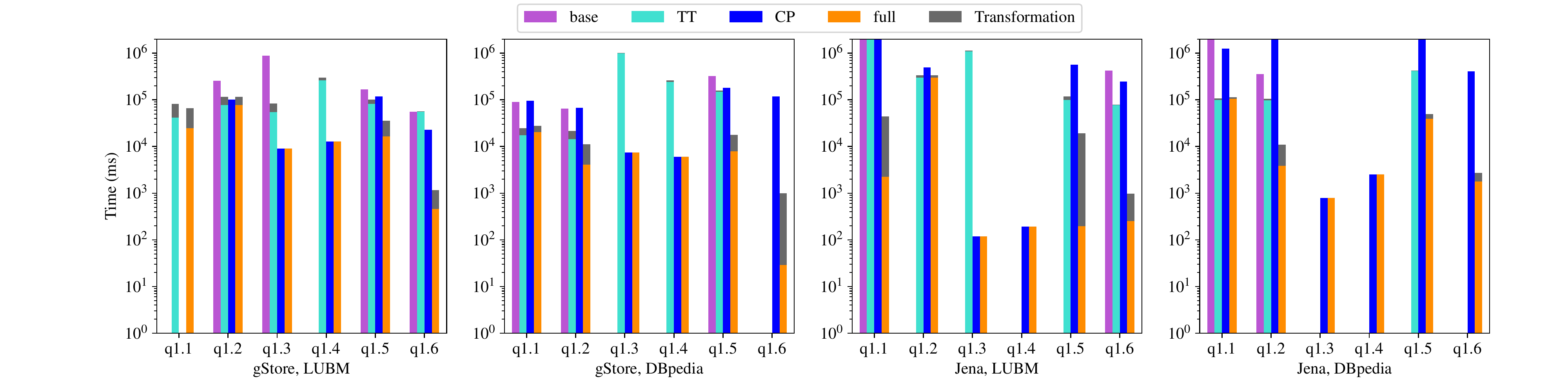}
	\vspace{-0.2in}
	\caption{Verification of optimizations.}
	\label{fig:exp-verify}
	\vspace{-0.1in}
\end{figure*}

\begin{figure*}[!h]
	\centering
	\includegraphics[scale=0.5]{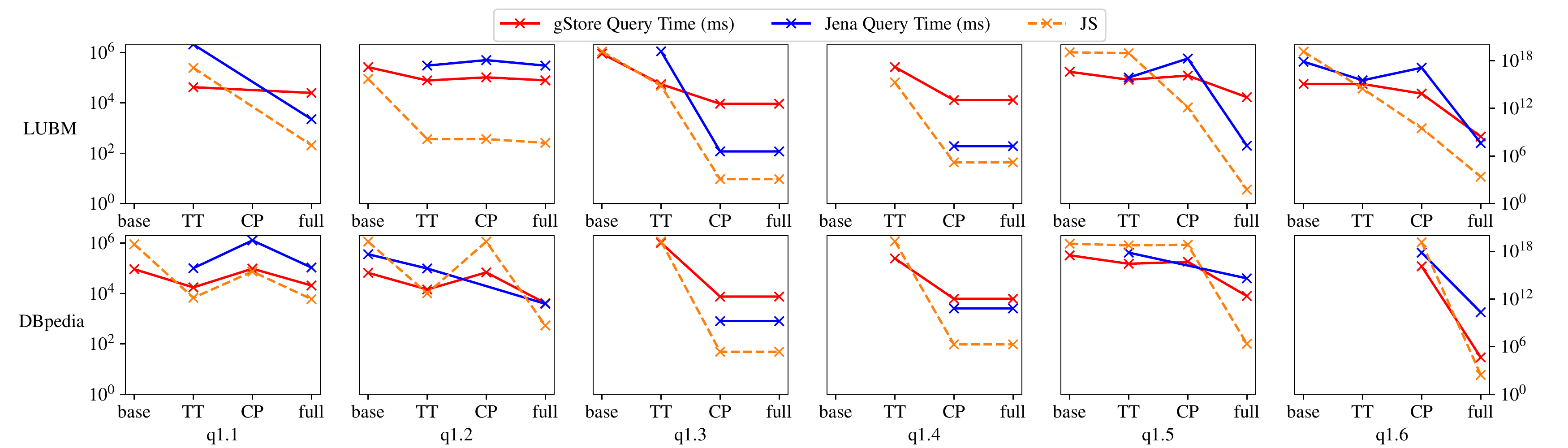}
	\vspace{-0.1in}
	\caption{The execution time and join space of queries.}
	\label{fig:exp-explain}
	\vspace{-0.1in}
\end{figure*}

We measure the performance by the query execution time. We also report the time spent carrying out the tree transformations for \texttt{TT} and \texttt{full}.
The performance of our approaches on LUBM and DBpedia is shown in Figure \ref{fig:exp-verify}.
The absence of a bar indicates an out-of-memory error on the query. A query is considered timed-out if the execution time exceeds $2 \times 10^6$ microseconds.

The trends of the results across gStore and Jena are similar, showing the adaptability of our approach regardless of the underlying BGP execution engine. 
Both of our proposed optimizations are shown to be effective since \texttt{TT}, \texttt{CP} and \texttt{full} perform better than \texttt{base} on all queries. \texttt{TT} and \texttt{CP} can be more advantageous on different queries and datasets than the other. Their optimization effects are cumulative when combined: \texttt{full} performs best all queries and datasets (except on q1.2 on gStore, where \texttt{CP} beats \texttt{full} by a small margin), beating the baseline by at least 2x and up to over an order of magnitude. Our optimized approaches also consume less memory. While \texttt{base} runs out of memory on 13 out of 24 queries, \texttt{full} successfully runs all the queries.

In the following, we try to draw some conclusions about the applicability of our optimizations to different SPARQL-UO queries by analyzing the benchmark queries and the behavior of the optimized approaches on them.

\Paragraph{When \texttt{TT} is effective.} q1.1 on DBpedia (Listing \ref{lst:q1.1_dbpedia}, Appendix \ref{app:queries} of \cite{zouSPARQLUO:22}) is a query on which \texttt{TT} is effective, but \texttt{CP} is not. In this query, two \texttt{UNION} clauses are given first (Lines 2-3), whose child BGPs all have low selectivity. There is no high-selectivity graph pattern before them to enable \texttt{CP}. However, \texttt{TT} can merge the high-selectivity BGP in Lines 5-8 with the \texttt{UNION} clause in Line 3 to accelerate query processing and reduce memory overhead, as evidenced in Figure \ref{fig:exp-verify}. q1.2 on LUBM and q1.2 on DBpedia also belong to this category. (Note that q1.2 on LUBM corresponds to the special case mentioned in Section \ref{sec:cand_pruning}, where there is only a BGP before an \texttt{OPTIONAL} clause, and thus \texttt{TT} and \texttt{CP} have a similar effect.)
\Paragraph{When \texttt{CP} is effective.} q1.3 on LUBM (Listing \ref{lst:q1.3_lubm}, Appendix \ref{app:queries} of \cite{zouSPARQLUO:22}) is a query on which \texttt{CP} is effective, but \texttt{TT} is not. In this query, the BGP in Line 2 has high selectivity, followed by nested \texttt{OPTIONAL}s with low-selectivity child BGPs. \texttt{TT} can inject the BGP into the outermost \texttt{OPTIONAL} but cannot reach the inner \texttt{OPTIONAL}s, thus having limited effect. However, \texttt{CP} can carry the small number of results into the innermost \texttt{OPTIONAL} and set them as candidates to accelerate query processing. q1.3-4 on LUBM and q1.3-4 on DBpedia also belong to this category.
\Paragraph{When \texttt{TT} and \texttt{CP} are jointly effective.} q1.6 on LUBM (Listing \ref{lst:q1.6_lubm}, Appendix \ref{app:queries} of \cite{zouSPARQLUO:22}) is a query on which \texttt{TT} and \texttt{CP} work complementarily, causing \texttt{full} to perform much better than \texttt{TT} and \texttt{CP}. In this query, the BGP in Lines 2-3 has high selectivity, while the BGP in Line 4 has relatively low selectivity. Upon obtaining their considerably large results, \texttt{CP} has limited effect on the following \texttt{UNION} clauses. \texttt{TT}, however, can pick the high-selectivity BGP to merge with the \texttt{UNION} in Line 5. Having executed the graph patterns up to Line 6, \texttt{CP} can accelerate the processing of upcoming \texttt{OPTIONAL}s. q1.1 and q1.5 on LUBM and q1.5 and q1.6 DBpedia also belong to this category.

\nop{
\begin{itemize}
	\item \textbf{When \texttt{TT} is effective.} q1.1 on DBpedia (Listing \ref{lst:TT_effective} ) is a query on which \texttt{TT} is effective, but \texttt{CP} is not. In this query, two \texttt{UNION} clauses are given first (Lines 11-12), whose child BGPs all have low selectivity. There is no high-selectivity graph pattern before them to enable \texttt{CP}. However, \texttt{TT} can merge the high-selectivity BGP in Lines 14-17 with the \texttt{UNION} clause in Line 12 to accelerate query processing and reduce memory overhead, as evidenced in Figure \ref{fig:exp-verify}. q1.2 on LUBM and q1.2 on DBpedia also belong to this category. (Note that q1.2 on LUBM corresponds to the special case mentioned in Section \ref{sec:cand_pruning}, where there is only a BGP before an \texttt{OPTIONAL} clause, and thus \texttt{TT} and \texttt{CP} have a similar effect.)
	\item \textbf{When \texttt{CP} is effective.} q1.3 on LUBM (Listing \ref{lst:CP_effective}) is a query on which \texttt{CP} is effective, but \texttt{TT} is not. In this query, the BGP in Line 5 has high selectivity, followed by nested \texttt{OPTIONAL}s with low-selectivity child BGPs. \texttt{TT} can inject the BGP into the outermost \texttt{OPTIONAL} but cannot reach the inner \texttt{OPTIONAL}s, thus having limited effect. However, \texttt{CP} can carry the small number of results into the innermost \texttt{OPTIONAL} and set them as candidates to accelerate query processing. q1.3-4 on LUBM and q1.3-4 on DBpedia also belong to this category.
	\item \textbf{When \texttt{TT} and \texttt{CP} are jointly effective.} q1.6 on LUBM (Listing \ref{lst:TTCP_effective}) is a query on which \texttt{TT} and \texttt{CP} work complementarily, causing \texttt{full} to perform much better than \texttt{TT} and \texttt{CP}. In this query, the BGP in Lines 4-5 has high selectivity, while the BGP in Line 6 has relatively low selectivity. Upon obtaining their considerably large results, \texttt{CP} has limited effect on the following \texttt{UNION} clauses. \texttt{TT}, however, can pick the high-selectivity BGP to merge with the \texttt{UNION} in Line 7. Having executed the graph patterns up to Line 8, \texttt{CP} can accelerate the processing of upcoming \texttt{OPTIONAL}s. q1.1 and q1.5 on LUBM and q1.5 and q1.6 DBpedia also belong to this category.
\end{itemize}
}
\nop{
\begin{lstlisting}[caption={q1.1 on DBpedia (Effective \texttt{TT})}, label={lst:TT_effective}, frame=single, breaklines=true, basicstyle=\footnotesize, morekeywords={PREFIX, SELECT, WHERE, UNION, OPTIONAL}, numbers=left, numbersep=4pt, numberstyle=\tiny\color{gray}]
PREFIX rdfs:  <http://www.w3.org/2000/01/rdf-schema#>
PREFIX foaf:  <http://xmlns.com/foaf/0.1/>
PREFIX purl:  <http://purl.org/dc/terms/>
PREFIX skos:  <http://www.w3.org/2004/02/skos/core#>
PREFIX nsprov:  <http://www.w3.org/ns/prov#>
PREFIX owl:  <http://www.w3.org/2002/07/owl#>
PREFIX dbo:  <http://dbpedia.org/ontology/>
PREFIX dbr:  <http://dbpedia.org/resource/>

SELECT * WHERE {
{ ?v3 rdfs:label ?v7. } UNION { ?v3 foaf:name ?v7. }
{ ?v1 purl:subject ?v3. } UNION { ?v3 skos:subject ?v1. }
?v3 rdfs:label ?v4.
?v5 nsprov:wasDerivedFrom ?v2.
?v1 owl:sameAs ?v6.
?v1 dbo:wikiPageWikiLink dbr:Economic_system.
?v1 nsprov:wasDerivedFrom ?v2. }
\end{lstlisting}

\begin{lstlisting}[caption={q1.3 on LUBM (Effective \texttt{CP})}, label={lst:CP_effective}, frame=single, breaklines=true, basicstyle=\footnotesize, morekeywords={PREFIX, SELECT, WHERE, UNION, OPTIONAL}, numbers=left, numbersep=4pt, numberstyle=\tiny\color{gray}]
PREFIX ub:  <http://swat.cse.lehigh.edu/onto/univ-bench.owl#>
PREFIX rdf:  <http://www.w3.org/1999/02/22-rdf-syntax-ns#>

SELECT * WHERE {
<http://www.Department1.University0.edu/UndergraduateStudent363> ub:takesCourse ?v1.
OPTIONAL { ?v2 ub:teachingAssistantOf ?v1.
OPTIONAL { ?v2 ub:memberOf ?v3.
?v4 ub:subOrganizationOf ?v3.
?v4 ub:subOrganizationOf ?v5.
?v4 rdf:type ?v6.
OPTIONAL { ?v5 ub:subOrganizationOf ?v7. } } } }
\end{lstlisting}

\begin{lstlisting}[caption={q1.6 on LUBM (Jointly effective \texttt{TT} and \texttt{CP})}, label={lst:TTCP_effective}, frame=single, breaklines=true, basicstyle=\footnotesize, morekeywords={PREFIX, SELECT, WHERE, UNION, OPTIONAL}, numbers=left, numbersep=4pt, numberstyle=\tiny\color{gray}]
PREFIX ub: <http://swat.cse.lehigh.edu/onto/univ-bench.owl#>

SELECT * WHERE {
?v4 ub:headOf ?v1.
<http://www.Department1.University0.edu/UndergraduateStudent256> ub:memberOf ?v1.
?v3 ub:subOrganizationOf ?v5.
{ ?v2 ub:worksFor ?v1. } UNION { ?v2 ub:headOf ?v1. }
{ ?v2 ub:worksFor ?v3. } UNION { ?v2 ub:headOf ?v3. }
OPTIONAL { ?v6 ub:publicationAuthor ?v2. }
OPTIONAL { { ?v7 ub:headOf ?v1. } UNION { ?v7 ub:worksFor ?v1. } } }
\end{lstlisting}
}

For a quantitative perspective on the optimization effects, we define the join space of a graph pattern $JS(P)$ as follows:

\begin{enumerate}
	\item If $P$ is a BGP, $JS(P) = |[\![P]\!]_D|$.
	\item If $P = \{P_1\}$, $JS(P) = JS(P_1)$.
	\item If $P = P_1 \ \texttt{AND} \ P_2$ or $P_1 \ \texttt{OPTIONAL} \ P_2$, $JS(P) = JS(P_1) \times JS(P_2)$.
	\item If $P = P_1 \ \texttt{UNION} \ P_2$, $JS(P) = JS(P_1) + JS(P_2)$.
\end{enumerate}

\begin{figure*}
	\centering
	\includegraphics[scale=0.5]{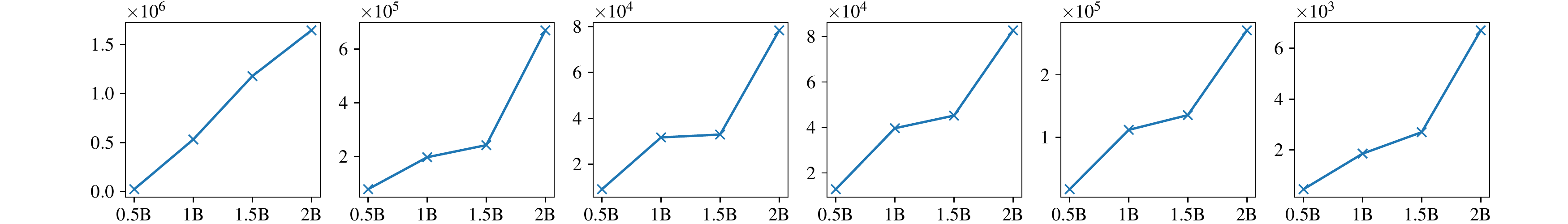}
	\vspace{-0.1in}
	\caption{Query execution time (ms) of \texttt{full} on LUBM datasets of different sizes (``B'' is short for billion).}
	\label{exp:scalability}
	\vspace{-0.1in}
\end{figure*}

The join space of a query estimates the largest intermediate result size that is materialized during the execution of this query. Therefore, it is indicative of both the query's execution time and memory overhead. We plot the execution time of all the queries on gStore and Jena (the y-axis on the left) with their respective join spaces (the y-axis on the right) in Figure \ref{fig:exp-explain}. Across the tested approaches, these three metrics show a similar trend. On all the queries, the join spaces of \texttt{TT} and \texttt{CP} are smaller than those of \texttt{base}, and \texttt{full} has the smallest join space overall, which corroborates the qualitative analysis above.

\nop{We note that on q1.3 and q1.4 of both datasets, \texttt{full} and \texttt{CP} have nearly identical query execution time, but the overall response time of \texttt{full} exceeds that of \texttt{CP} due to the extra time cost of tree transformation. This is when tree transformation actively decides to defer the optimization to run-time, as described in Section \ref{sec:cand_pruning}, and the fixed threshold on result size happens to perform well enough. Apparently, such a phenomenon does not render tree transformation unnecessary, because \texttt{CP} still performs significantly worse than \texttt{full} on other queries. Even on these queries, it is difficult to tune an appropriate threshold without the BGP result size estimation. We take on the mission of further reducing the cost of tree transformation (e.g., by parallelization) as future work.}


\nop{
\begin{figure}
	\centering
	\includegraphics[scale=1.90]{new_figure/case_study.png}
	\vspace{-0.1in}
	\caption{The original and transformed q1.1 of DBpedia}
	\label{fig:case_study}
	\vspace{-0.1in}
\end{figure}
}


\nop{
In summary, the experiments conducted in this section show that the cost-driven BE-tree transformations and candidate pruning can significantly improve both the time and space efficiency of SPARQL-UO query evaluation.
}

\subsection{Comparison with State-of-the-Art}

\begin{figure}[ht]
	\centering
	\includegraphics[scale=0.5]{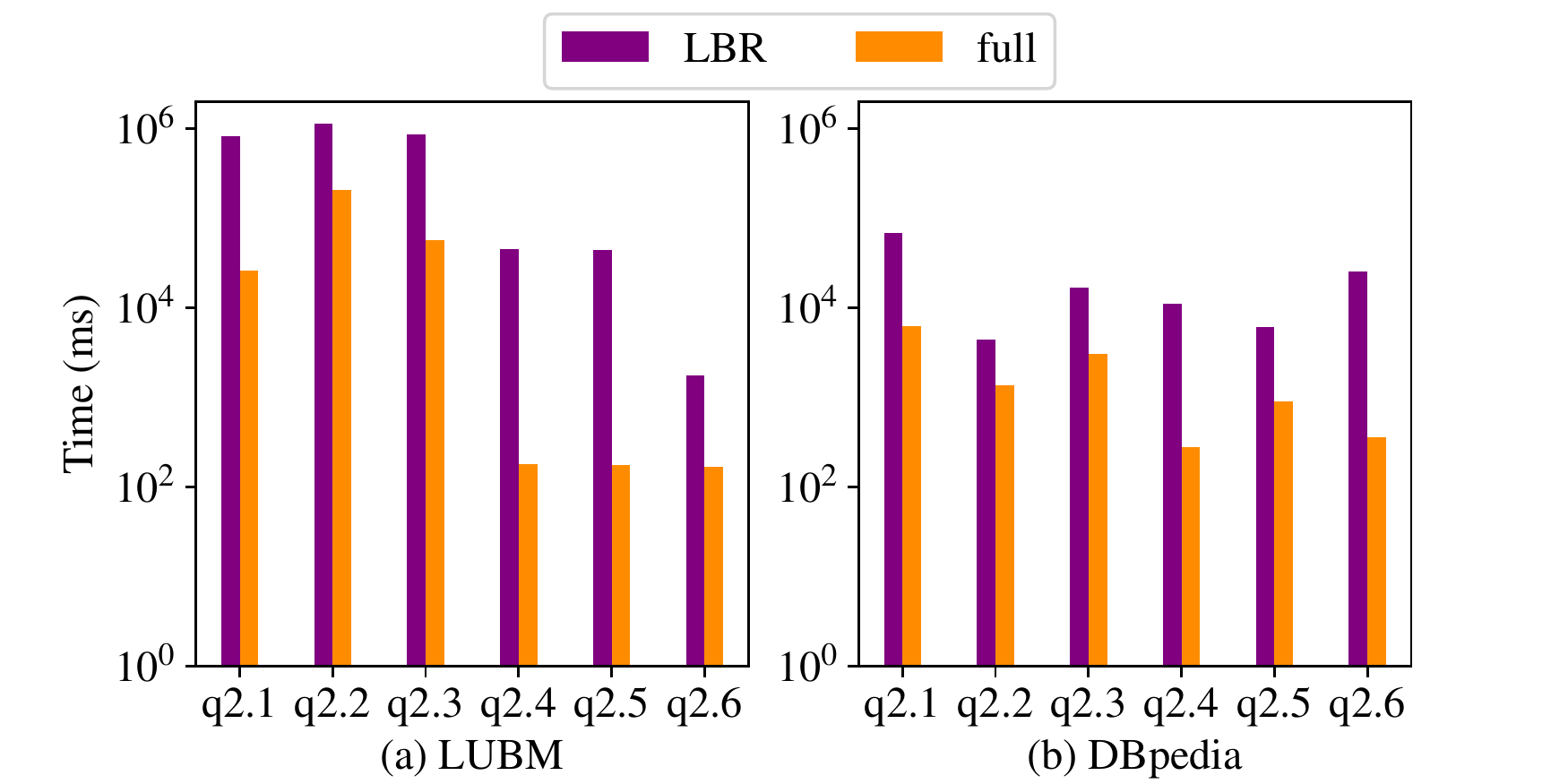}
	\vspace{-0.1in}
	\caption{Comparison with state-of-the-art on LUBM.}
	\label{exp:overall-lubm}
	\vspace{-0.15in}
\end{figure}

\nop{
\begin{figure}[ht]
	\centering
	\includegraphics[scale=0.45]{exp/sota_dbpedia.pdf}
	\vspace{-0.1in}
	\caption{Comparison with state-of-the-art on DBpedia.}
	\label{exp:overall-dbpedia}
	\vspace{-0.1in}
\end{figure}
}

The only work that considers SPARQL with \texttt{OPTIONAL} query optimization is LBR \cite{DBLP:conf/sigmod/Atre15}. Thus, we compare our \texttt{full} approach with LBR. We also implement LBR in C++.
We experiment on the queries provided in LBR \cite{DBLP:conf/sigmod/Atre15} on LUBM and DBpedia, listed as q2.1-2.6, given in Appendix \ref{app:queries} of \cite{zouSPARQLUO:22}. The statistics of these queries are given in the second group in Tables \ref{tab:exp_uo_lubm} and \ref{tab:exp_uo_dbpedia}. q2.1-2.3 are complex with multiple nested group graph patterns, each containing a low-selectivity BGP followed by an \texttt{OPTIONAL} with a single low-selectivity child BGP. Meanwhile, q2.4-2.6 are simple without nested group graph patterns, and their outermost group graph pattern contains a high-selectivity BGP followed by an \texttt{OPTIONAL}.

The total response time of \texttt{full} and LBR are shown in Figures \ref{exp:overall-lubm}.
\texttt{full} is significantly faster than LBR on all queries, and the improvement on q2.4-2.6 is more significant than on q2.1-2.3. This is because candidate pruning can take advantage of the high-selectivity BGPs in q2.4-2.6, while q2.1-2.3 does not contain high-selectivity BGPs. (Note that since all the group graph patterns in q2.1-2.6 contain a BGP followed by an \texttt{OPTIONAL} clause, they correspond to the special case mentioned in Section \ref{sec:cand_pruning} where tree transformation and candidate pruning are equivalent, hence only candidate pruning is performed.) The results show that when candidate pruning takes effect, it is more efficient than LBR's heavy-weight pruning strategies. On q2.1-2.3, \texttt{full} is still faster than LBR since its BGP-based evaluation scheme is more efficient than LBR's separate treatment of triple patterns. 

In summary, our approach outperforms LBR on \texttt{OPTIONAL} queries, despite LBR being optimized for \texttt{OPTIONAL}.


\subsection{Scalability Study}

Lastly, we evaluate how well our approach scales to larger datasets. By setting the scaling factor of LUBM, i.e., the number of universities, we generate three more LUBM datasets with 1, 1.5 and 2 billion triples, respectively. We run the \texttt{full} approach on q1.1-q1.6 on these datasets and plot how the execution time changes with the dataset size on each query in Figure \ref{exp:scalability}.

These plots are empirical complexity curves of our approach.
Our approach scales almost linearly to the number of triples in the datasets. The growth rate of the query execution time correlates with each query's result sizes: the execution time of queries with larger result sizes grows faster with the dataset size. (The result sizes of q1.3-1.6 on larger LUBM datasets are equal to those shown in Table \ref{tab:exp_uo_lubm}, while those of q1.1-1.2 grow linearly.)
\vspace{-0.0in}
\section{Conclusion}\label{sec:conclusion}
The proliferation of knowledge graph applications has generated increasing RDF data management problems. In this paper, we focus on how to optimize SPARQL queries with \texttt{UNION} and \texttt{OPTIONAL} clauses (SPARQL-UO for short). Making use of existing BGP query evaluation modules in SPARQL engines, we propose a series of cost-driven transformations on the BGP-based evaluation tree (BE-tree). These optimizations can significantly reduce the search space and intermediate result sizes, and thus improve both the time and space efficiency of SPARQL-UO query evaluation. We experimentally validate the effectiveness of our optimizations, and compare the performance of the optimized method with the state-of-the-art on large-scale synthetic and real RDF datasets containing millions of triples. These experiments confirm that our SPARQL-UO query evaluation method is orders of magnitude more efficient than existing work.



\clearpage


\bibliographystyle{spmpsci}      
\bibliography{sample-base}   

\clearpage
\appendix
{\noindent\Large\textbf{APPENDIX}}
\section{Queries used in Experiments}\label{app:queries}

\subsection{Queries on LUBM}


\begin{lstlisting}[caption={Prefixes of LUBM Queries}, frame=single, breaklines=true, basicstyle=\scriptsize, morekeywords={PREFIX, SELECT, WHERE, UNION, OPTIONAL}]
PREFIX ub: <http://swat.cse.lehigh.edu/onto/univ-bench.owl#>
PREFIX rdf: <http://www.w3.org/1999/02/22-rdf-syntax-ns#>
\end{lstlisting}

\begin{lstlisting}[caption={q1.1 on LUBM}, frame=single, breaklines=true, basicstyle=\scriptsize, morekeywords={PREFIX, SELECT, WHERE, UNION, OPTIONAL}, numbers=left, numbersep=4pt, numberstyle=\tiny, label={lst:q1.1_lubm}]
SELECT * WHERE {
{ ?v2 ub:headOf ?v1. } UNION { ?v2 ub:worksFor ?v1. }
?v2 ub:undergraduateDegreeFrom ?v3.
?v4 ub:doctoralDegreeFrom ?v3.
?v5 ub:publicationAuthor ?v2.
{ ?v6 ub:headOf ?v1. } UNION { ?v6 ub:worksFor ?v1. }
{ ?v2 ub:headOf ?v7. } UNION { ?v2 ub:worksFor ?v7. }
<http://www.Department0.University0.edu/UndergraduateStudent91> ub:memberOf ?v1.
?v7 ub:name ?v8. }
\end{lstlisting}

\begin{lstlisting}[caption={q1.2 on LUBM}, frame=single, breaklines=true, basicstyle=\scriptsize, morekeywords={PREFIX, SELECT, WHERE, UNION, OPTIONAL}, numbers=left, numbersep=4pt, numberstyle=\tiny]
SELECT * WHERE {
?v3 ub:emailAddress "UndergraduateStudent91@Department0.University0.edu" .
?v2 ub:emailAddress ?v1 .
OPTIONAL { ?v2 ub:teacherOf ?v4. ?v3 ub:takesCourse ?v4 . } }
\end{lstlisting}

\begin{lstlisting}[caption={q1.3 on LUBM}, label={lst:CP_effective}, frame=single, breaklines=true, basicstyle=\scriptsize, morekeywords={PREFIX, SELECT, WHERE, UNION, OPTIONAL}, numbers=left, numbersep=4pt, numberstyle=\tiny, label={lst:q1.3_lubm}]
SELECT * WHERE {
<http://www.Department1.University0.edu/UndergraduateStudent363> ub:takesCourse ?v1.
OPTIONAL { ?v2 ub:teachingAssistantOf ?v1.
OPTIONAL { ?v2 ub:memberOf ?v3.
?v4 ub:subOrganizationOf ?v3.
?v4 ub:subOrganizationOf ?v5.
?v4 rdf:type ?v6.
OPTIONAL { ?v5 ub:subOrganizationOf ?v7. } } } }
\end{lstlisting}

\begin{lstlisting}[caption={q1.4 on LUBM}, frame=single, breaklines=true, basicstyle=\scriptsize, morekeywords={PREFIX, SELECT, WHERE, UNION, OPTIONAL}, numbers=left, numbersep=4pt, numberstyle=\tiny]
SELECT * WHERE {
?v1	ub:emailAddress	"UndergraduateStudent309@Department12.University0.edu".
OPTIONAL{ ?v1 ub:memberOf ?v2. ?v2 ub:name ?v3.
OPTIONAL{?v5 ub:publicationAuthor ?v4. ?v4 ub:worksFor ?v2.
OPTIONAL{ ?v6 ub:publicationAuthor ?v4. } } } }
\end{lstlisting}

\begin{lstlisting}[caption={q1.5 on LUBM}, frame=single, breaklines=true, basicstyle=\scriptsize, morekeywords={PREFIX, SELECT, WHERE, UNION, OPTIONAL}, numbers=left, numbersep=4pt, numberstyle=\tiny]
SELECT * WHERE {
{ ?v2 <http://www.w3.org/1999/02/22-rdf-syntax-ns#type> ?v3. }
UNION
{ ?v2 ub:name ?v4. }
<http://www.Department0.University0.edu/UndergraduateStudent356> ub:memberOf	?v1.
?v2	ub:worksFor	?v1.
OPTIONAL{ ?v5 ub:advisor ?v2.
OPTIONAL{ ?v5 ub:teachingAssistantOf ?v6.} }
OPTIONAL{ ?v7 ub:advisor ?v2. } }
\end{lstlisting}

\begin{lstlisting}[caption={q1.6 on LUBM}, label={lst:TTCP_effective}, frame=single, breaklines=true, basicstyle=\scriptsize, morekeywords={PREFIX, SELECT, WHERE, UNION, OPTIONAL}, numbers=left, numbersep=4pt, numberstyle=\tiny, label={lst:q1.6_lubm}]
SELECT * WHERE {
?v4 ub:headOf ?v1.
<http://www.Department1.University0.edu/UndergraduateStudent256> ub:memberOf ?v1.
?v3 ub:subOrganizationOf ?v5.
{ ?v2 ub:worksFor ?v1. } UNION { ?v2 ub:headOf ?v1. }
{ ?v2 ub:worksFor ?v3. } UNION { ?v2 ub:headOf ?v3. }
OPTIONAL { ?v6 ub:publicationAuthor ?v2. }
OPTIONAL { { ?v7 ub:headOf ?v1. } UNION { ?v7 ub:worksFor ?v1. } } }
\end{lstlisting}

\begin{lstlisting}[caption={q2.1 on LUBM}, frame=single, breaklines=true, basicstyle=\scriptsize, morekeywords={PREFIX, SELECT, WHERE, UNION, OPTIONAL}, numbers=left, numbersep=4pt, numberstyle=\tiny]
SELECT * WHERE {
{?st ub:teachingAssistantOf ?course.
OPTIONAL { ?st ub:takesCourse ?course2. ?pub1 ub:publicationAuthor ?st. } }
{?prof ub:teacherOf ?course. ?st ub:advisor ?prof.
OPTIONAL { ?prof ub:researchInterest ?resint. ?pub2 ub:publicationAuthor ?prof. } } }
\end{lstlisting}

\begin{lstlisting}[caption={q2.2 on LUBM}, frame=single, breaklines=true, basicstyle=\scriptsize, morekeywords={PREFIX, SELECT, WHERE, UNION, OPTIONAL}, numbers=left, numbersep=4pt, numberstyle=\tiny]
SELECT * WHERE {
{?pub rdf:type ub:Publication. ?pub ub:publicationAuthor ?st. ?pub ub:publicationAuthor ?prof.
OPTIONAL { ?st ub:emailAddress ?ste. ?st ub:telephone ?sttel. } }
{?st ub:undergraduateDegreeFrom ?univ. ?dept ub:subOrganizationOf ?univ.
OPTIONAL { ?head ub:headOf ?dept. ?others ub:worksFor ?dept. } }
{?st ub:memberOf ?dept. ?prof ub:worksFor ?dept.
OPTIONAL { ?prof ub:doctoralDegreeFrom ?univ1. ?prof ub:researchInterest ?resint1. } } }
\end{lstlisting}

\begin{lstlisting}[caption={q2.3 on LUBM}, frame=single, breaklines=true, basicstyle=\scriptsize, morekeywords={PREFIX, SELECT, WHERE, UNION, OPTIONAL}, numbers=left, numbersep=4pt, numberstyle=\tiny]
SELECT * WHERE {
{?pub ub:publicationAuthor ?st. ?pub ub:publicationAuthor ?prof.
?st rdf:type ub:GraduateStudent.
OPTIONAL { ?st ub:undergraduateDegreeFrom ?univ1. ?st ub:telephone ?sttel. } }
{?st ub:advisor ?prof.
OPTIONAL { ?prof ub:doctoralDegreeFrom ?univ. ?prof ub:researchInterest ?resint. } }
{?st ub:memberOf ?dept. ?prof ub:worksFor ?dept. ?prof rdf:type ub:FullProfessor.
OPTIONAL { ?head ub:headOf ?dept. ?others ub:worksFor ?dept. } } }
\end{lstlisting}

\begin{lstlisting}[caption={q2.4 on LUBM}, frame=single, breaklines=true, basicstyle=\scriptsize, morekeywords={PREFIX, SELECT, WHERE, UNION, OPTIONAL}, numbers=left, numbersep=4pt, numberstyle=\tiny]
SELECT * WHERE {
?x ub:worksFor <http://www.Department0.University0.edu>.
?x rdf:type ub:FullProfessor.
OPTIONAL { ?y ub:advisor ?x. ?x ub:teacherOf ?z. ?y ub:takesCourse ?z. } }
\end{lstlisting}

\begin{lstlisting}[caption={q2.5 on LUBM}, frame=single, breaklines=true, basicstyle=\scriptsize, morekeywords={PREFIX, SELECT, WHERE, UNION, OPTIONAL}, numbers=left, numbersep=4pt, numberstyle=\tiny]
SELECT * WHERE {
?x ub:worksFor <http://www.Department0.University12.edu>.
?x rdf:type ub:FullProfessor.
OPTIONAL { ?y ub:advisor ?x. ?x ub:teacherOf ?z. ?y ub:takesCourse ?z. } }
\end{lstlisting}

\begin{lstlisting}[caption={q2.6 on LUBM}, frame=single, breaklines=true, basicstyle=\scriptsize, morekeywords={PREFIX, SELECT, WHERE, UNION, OPTIONAL}, numbers=left, numbersep=4pt, numberstyle=\tiny]
SELECT * WHERE {
?x ub:worksFor <http://www.Department0.University12.edu>.
?x rdf:type ub:FullProfessor.
OPTIONAL { ?x ub:emailAddress ?y1. ?x ub:telephone ?y2. ?x ub:name ?y3. } }
\end{lstlisting}

\subsection{Queries on DBpedia}

\begin{lstlisting}[caption={Prefixes of DBpedia Queries}, frame=single, breaklines=true, basicstyle=\scriptsize, morekeywords={PREFIX, SELECT, WHERE, UNION, OPTIONAL}]
PREFIX rdf:   <http://www.w3.org/1999/02/22-rdf-syntax-ns#>
PREFIX rdfs:  <http://www.w3.org/2000/01/rdf-schema#>
PREFIX foaf:  <http://xmlns.com/foaf/0.1/>
PREFIX purl:  <http://purl.org/dc/terms/>
PREFIX skos:  <http://www.w3.org/2004/02/skos/core#>
PREFIX nsprov:  <http://www.w3.org/ns/prov#>
PREFIX owl:  <http://www.w3.org/2002/07/owl#>
PREFIX dbo:  <http://dbpedia.org/ontology/>
PREFIX dbr:  <http://dbpedia.org/resource/>
PREFIX dbp:   <http://dbpedia.org/property/>
\end{lstlisting}

\begin{lstlisting}[caption={q1.1 on DBpedia}, label={lst:TT_effective}, frame=single, breaklines=true, basicstyle=\scriptsize, morekeywords={PREFIX, SELECT, WHERE, UNION, OPTIONAL}, numbers=left, numbersep=4pt, numberstyle=\tiny, label={lst:q1.1_dbpedia}]
SELECT * WHERE {
{ ?v3 rdfs:label ?v7. } UNION { ?v3 foaf:name ?v7. }
{ ?v1 purl:subject ?v3. } UNION { ?v3 skos:subject ?v1. }
?v3 rdfs:label ?v4.
?v5 nsprov:wasDerivedFrom ?v2.
?v1 owl:sameAs ?v6.
?v1 dbo:wikiPageWikiLink dbr:Economic_system.
?v1 nsprov:wasDerivedFrom ?v2. }
\end{lstlisting}

\begin{lstlisting}[caption={q1.2 on DBpedia}, frame=single, breaklines=true, basicstyle=\scriptsize, morekeywords={PREFIX, SELECT, WHERE, UNION, OPTIONAL}, numbers=left, numbersep=4pt, numberstyle=\tiny]
SELECT * WHERE {
{ ?v3 purl:subject ?v5. OPTIONAL{ ?v5 rdfs:label ?v6 } }
UNION
{ ?v5 skos:subject ?v3. OPTIONAL{ ?v5 foaf:name ?v6 } }
?v1 dbo:wikiPageWikiLink dbr:Economic_system .
?v1 nsprov:wasDerivedFrom ?v2 .
?v3 dbo:wikiPageWikiLink ?v4 .
?v3 nsprov:wasDerivedFrom ?v2 . }
\end{lstlisting}

\begin{lstlisting}[caption={q1.3 on DBpedia}, frame=single, breaklines=true, basicstyle=\scriptsize, morekeywords={PREFIX, SELECT, WHERE, UNION, OPTIONAL}, numbers=left, numbersep=4pt, numberstyle=\tiny]
SELECT * WHERE {
dbr:Air_masses foaf:isPrimaryTopicOf ?v1.
?v2 foaf:isPrimaryTopicOf ?v1.
OPTIONAL {
?v2 dbo:wikiPageRedirects ?v3. ?v4 foaf:primaryTopic ?v2.
OPTIONAL{
?v5 dbo:wikiPageWikiLink ?v3.
OPTIONAL{ ?v6 dbo:wikiPageRedirects ?v5.
OPTIONAL{ ?v6 dbo:wikiPageWikiLink ?v7. } } } } }
\end{lstlisting}

\begin{lstlisting}[caption={q1.4 on DBpedia}, frame=single, breaklines=true, basicstyle=\scriptsize, morekeywords={PREFIX, SELECT, WHERE, UNION, OPTIONAL}, numbers=left, numbersep=4pt, numberstyle=\tiny]
SELECT * WHERE {
dbr:Functional_neuroimaging purl:subject ?v1.
OPTIONAL{
?v1 owl:sameAs ?v2. ?v1 rdf:type ?v3. ?v4 owl:sameAs ?v2. ?v5 skos:related ?v4.
OPTIONAL{ ?v6 skos:related ?v4. }
OPTIONAL{
{ ?v7 purl:subject ?v1. } UNION { ?v1 skos:subject ?v7. }
OPTIONAL{
{ ?v7 purl:subject ?v8. } UNION { ?v8 skos:subject ?v7. } } } } }
\end{lstlisting}

\begin{lstlisting}[caption={q1.5 on DBpedia}, frame=single, breaklines=true, basicstyle=\scriptsize, morekeywords={PREFIX, SELECT, WHERE, UNION, OPTIONAL}, numbers=left, numbersep=4pt, numberstyle=\tiny]
SELECT * WHERE {
{ ?v2 purl:subject ?v3. } UNION { ?v2 dbo:wikiPageWikiLink ?v4. }
?v1 dbo:wikiPageWikiLink dbr:Abdul_Rahim_Wardak.
?v2 dbo:wikiPageWikiLink ?v1.
OPTIONAL{ ?v5 owl:sameAs ?v2.
OPTIONAL{ ?v5 dbo:wikiPageLength ?v6. } }
OPTIONAL{ ?v2 skos:prefLabel ?v7 . } }
\end{lstlisting}

\begin{lstlisting}[caption={q1.6 on DBpedia}, frame=single, breaklines=true, basicstyle=\scriptsize, morekeywords={PREFIX, SELECT, WHERE, UNION, OPTIONAL}, numbers=left, numbersep=4pt, numberstyle=\tiny]
SELECT * WHERE {
{ ?v2 foaf:primaryTopic ?v1. } UNION { ?v1 foaf:isPrimaryTopicOf ?v2. }
{ ?v2 foaf:primaryTopic ?v3. } UNION { ?v3 foaf:isPrimaryTopicOf ?v2. }
?v1 dbo:wikiPageWikiLink dbr:Category:Cell_biology.
?v3 dbo:wikiPageWikiLink ?v1.
OPTIONAL{
{ ?v2 foaf:primaryTopic ?v4. } UNION { ?v4 foaf:isPrimaryTopicOf ?v2. } }
OPTIONAL{ ?v5 dbo:phylum ?v3. ?v6 dbo:phylum ?v3.
OPTIONAL{
{ ?v7 foaf:primaryTopic ?v5. } UNION { ?v5 foaf:isPrimaryTopicOf ?v7. } } } }
\end{lstlisting}

\begin{lstlisting}[caption={q2.1 on DBpedia}, frame=single, breaklines=true, basicstyle=\scriptsize, morekeywords={PREFIX, SELECT, WHERE, UNION, OPTIONAL}, numbers=left, numbersep=4pt, numberstyle=\tiny]
SELECT * WHERE {
{ ?v6 a dbo:PopulatedPlace. ?v6 dbo:abstract ?v1.
?v6 rdfs:label ?v2. ?v6 geo:lat ?v3. ?v6 geo:long ?v4.
OPTIONAL { ?v6 foaf:depiction ?v8. } }
OPTIONAL { ?v6 foaf:homepage ?v10. }
OPTIONAL { ?v6 dbo:populationTotal ?v12. }
OPTIONAL { ?v6 dbo:thumbnail ?v14. } }
\end{lstlisting}

\begin{lstlisting}[caption={q2.2 on DBpedia}, frame=single, breaklines=true, basicstyle=\scriptsize, morekeywords={PREFIX, SELECT, WHERE, UNION, OPTIONAL}, numbers=left, numbersep=4pt, numberstyle=\tiny]
SELECT * WHERE {
?v3 foaf:homepage ?v0. ?v3 a dbo:SoccerPlayer. ?v3 dbp:position ?v6.
?v3 dbp:clubs ?v8. ?v8 dbo:capacity ?v1. ?v3 dbo:birthPlace ?v5.
OPTIONAL { ?v3 dbo:number ?v9. } }
\end{lstlisting}

\begin{lstlisting}[caption={q2.3 on DBpedia}, frame=single, breaklines=true, basicstyle=\scriptsize, morekeywords={PREFIX, SELECT, WHERE, UNION, OPTIONAL}, numbers=left, numbersep=4pt, numberstyle=\tiny]
SELECT * WHERE {
?v5 dbo:thumbnail ?v4. ?v5 rdf:type dbo:Person. ?v5 rdfs:label ?v.
?v5 foaf:homepage ?v8.
OPTIONAL { ?v5 foaf:homepage ?v10. } }
\end{lstlisting}

\begin{lstlisting}[caption={q2.4 on DBpedia}, frame=single, breaklines=true, basicstyle=\scriptsize, morekeywords={PREFIX, SELECT, WHERE, UNION, OPTIONAL}, numbers=left, numbersep=4pt, numberstyle=\tiny]
SELECT * WHERE {
{ ?v2 a dbo:Settlement. ?v2 rdfs:label ?v. ?v6 a dbo:Airport.
?v6 dbo:city ?v2. ?v6 dbp:iata ?v5.
OPTIONAL { ?v6 foaf:homepage ?v7. } }
OPTIONAL { ?v6 dbp:nativename ?v8. } }
\end{lstlisting}

\begin{lstlisting}[caption={q2.5 on DBpedia}, frame=single, breaklines=true, basicstyle=\scriptsize, morekeywords={PREFIX, SELECT, WHERE, UNION, OPTIONAL}, numbers=left, numbersep=4pt, numberstyle=\tiny]
SELECT * WHERE {
?v4 skos:subject ?v. ?v4 foaf:name ?v6.
OPTIONAL { ?v4 rdfs:comment ?v8. } }
\end{lstlisting}

\begin{lstlisting}[caption={q2.6 on DBpedia}, frame=single, breaklines=true, basicstyle=\scriptsize, morekeywords={PREFIX, SELECT, WHERE, UNION, OPTIONAL}, numbers=left, numbersep=4pt, numberstyle=\tiny]
SELECT * WHERE {
?v0 rdfs:comment ?v1. ?v0 foaf:page ?v.
OPTIONAL { ?v0 skos:subject ?v6. }
OPTIONAL { ?v0 dbp:industry ?v5. }
OPTIONAL { ?v0 dbp:location ?v2. }
OPTIONAL { ?v0 dbp:locationCountry ?v3. }
OPTIONAL { ?v0 dbp:locationCity ?v9. ?a dbp:manufacturer ?v0. }
OPTIONAL { ?v0 dbp:products ?v11. ?b dbp:model ?v0. }
OPTIONAL { ?v0 georss:point ?v10. }
OPTIONAL { ?v0 rdf:type ?v7. } }
\end{lstlisting}

\nop{
\begin{table*}
	\centering
	\caption{Queries q1.1-1.5 on LUBM}
	\begin{tabular}{|c|l|}
		\hline
		Mark & \multicolumn{1}{c|}{Query} \\ \hline
		q1.1 & \begin{tabular}[c]{@{}l@{}}SELECT * WHERE \{\\ \textless{}http://www.Department0.University0.edu/UndergraduateStudent91\textgreater \\ ub:memberOf ?v1. \{ ?v2 ub:headOf ?v1. \} UNION \{ ?v2 ub:worksFor ?v1. \}\\ ?v2 ub:undergraduateDegreeFrom ?v3.\\ ?v4 ub:doctoralDegreeFrom ?v3. ?v5 ub:publicationAuthor ?v2.\\ \{ ?v6 ub:headOf ?v1. \} UNION \{ ?v6 ub:worksFor ?v1. \}\\ \{ ?v2 ub:headOf ?v7. \} UNION \{ ?v2 ub:worksFor ?v7. \} ?v7 ub:name ?v8. \}\end{tabular} \\ \hline
		q1.2 & \begin{tabular}[c]{@{}l@{}}SELECT * WHERE \{\\ \textless{}http://www.Department1.University0.edu/UndergraduateStudent363\textgreater \\ ub:takesCourse ?v1. OPTIONAL\{ ?v2 ub:teachingAssistantOf ?v1.\\ OPTIONAL\{ ?v2 ub:memberOf ?v3. ?v4 ub:subOrganizationOf ?v3.\\ ?v4 ub:subOrganizationOf ?v5. ?v4 rdf:type ?v6.\\ OPTIONAL\{ ?v5 ub:subOrganizationOf ?v7. \} \} \} \}\end{tabular} \\ \hline
		q1.3 & \begin{tabular}[c]{@{}l@{}}SELECT * WHERE \{\\ \textless{}http://www.Department0.University0.edu/UndergraduateStudent356\textgreater \\ ub:memberOf ?v1. \{ ?v2 ub:worksFor ?v1. \} UNION \{ ?v2 ub:headOf ?v1. \}\\ ?v1 rdf:type ?v3. OPTIONAL\{ ?v4 ub:advisor ?v2.\\ OPTIONAL\{ ?v4 ub:teachingAssistantOf ?v5. ?v4 ub:name ?v6. \} \}\\ OPTIONAL\{ ?v7 ub:advisor ?v2. \} \}\end{tabular} \\ \hline
		q1.4 & \begin{tabular}[c]{@{}l@{}}SELECT * WHERE \{\\ ?v1 ub:emailAddress \\ "UndergraduateStudent309@Department12.University0.edu".\\ OPTIONAL\{ ?v1 ub:memberOf ?v2. ?v2 ub:name ?v3.\\ OPTIONAL\{ \{ ?v4 ub:worksFor ?v2. \} UNION \{ ?v4 ub:headOf ?v2. \}\\ ?v5 ub:publicationAuthor ?v4.\\ OPTIONAL\{ ?v6 ub:publicationAuthor ?v4. \} \} \} \}\end{tabular} \\ \hline
		q1.5 & \begin{tabular}[c]{@{}l@{}}SELECT * WHERE \{\\ \textless{}http://www.Department1.University0.edu/UndergraduateStudent256\textgreater \\ ub:memberOf ?v1.\\ \{ ?v2 ub:worksFor ?v1. \} UNION \{ ?v2 ub:headOf ?v1. \}\\ \{ ?v2 ub:worksFor ?v3. \} UNION \{ ?v2 ub:headOf ?v3. \}\\ ?v4 ub:headOf ?v1. ?v3 ub:subOrganizationOf ?v5.\\ OPTIONAL\{ ?v6 ub:publicationAuthor ?v2. \}\\ OPTIONAL\{ \{ ?v7 ub:headOf ?v1. \} UNION \{ ?v7 ub:worksFor ?v1. \} \} \}\end{tabular} \\ \hline
	\end{tabular}
	\label{tab:lubm_q1}
\end{table*}

\begin{table*}[h!]
	\centering
	\caption{Queries q1.1-1.5 on DBpedia}
	\begin{tabular}{|c|l|}
		\hline
		Mark & \multicolumn{1}{c|}{Query} \\ \hline
		q1.1 & \begin{tabular}[c]{@{}l@{}}SELECT * WHERE \{\\ ?v1 dbo:wikiPageWikiLink dbr:Economic\_system.\\ ?v1 nsprov:wasDerivedFrom ?v2.\\ \{ ?v1 purl:subject ?v3. \} UNION \{ ?v3 skos:subject ?v1. \}\\ ?v3 rdfs:label ?v4. ?v5 nsprov:wasDerivedFrom ?v2. ?v1 owl:sameAs ?v6.\\ \{ ?v3 rdfs:label ?v7. \} UNION \{ ?v3 foaf:name ?v7. \}\\ \{ ?v5 purl:subject ?v8. \} UNION \{ ?v8 skos:subject ?v5. \} \}\end{tabular} \\ \hline
		q1.2 & \begin{tabular}[c]{@{}l@{}}SELECT * WHERE \{\\ dbr:Air\_masses foaf:isPrimaryTopicOf ?v1. ?v2 foaf:isPrimaryTopicOf ?v1.\\ OPTIONAL\{ ?v2 dbo:wikiPageRedirects ?v3. ?v4 foaf:primaryTopic ?v2.\\ OPTIONAL\{ ?v5 dbo:wikiPageWikiLink ?v3.\\ OPTIONAL\{ ?v6 dbo:wikiPageRedirects ?v5.\\ OPTIONAL\{ ?v6 dbo:wikiPageWikiLink ?v7. \} \} \} \} \}\end{tabular} \\ \hline
		q1.3 & \begin{tabular}[c]{@{}l@{}}SELECT * WHERE \{\\ ?v1 dbo:wikiPageWikiLink dbr:Abdul\_Rahim\_Wardak.\\ ?v1 owl:sameAs ?v2. ?v3 owl:sameAs ?v2.\\ \{ ?v3 purl:subject ?v4. \} UNION \{ ?v3 dbo:wikiPageWikiLink ?v4. \}\\ ?v3 dbo:wikiPageWikiLink ?v1.\\ \{ ?v4 skos:prefLabel ?v5. \} UNION \{ ?v4 rdfs:label ?v5. \}\\ OPTIONAL\{ ?v6 owl:sameAs ?v2. \\ OPTIONAL\{ ?v6 dbo:wikiPageLength ?v7. \} \} \}\end{tabular} \\ \hline
		q1.4 & \begin{tabular}[c]{@{}l@{}}SELECT * WHERE \{\\ dbr:Functional\_neuroimaging purl:subject ?v1.\\ OPTIONAL\{ ?v1 owl:sameAs ?v2. ?v1 rdf:type ?v3.\\ ?v4 owl:sameAs ?v2. ?v5 skos:related ?v4. OPTIONAL\{ ?v6 skos:related ?v4. \}\\ OPTIONAL\{ \{ ?v7 purl:subject ?v1. \} UNION \{ ?v1 skos:subject ?v7. \}\\ OPTIONAL\{ \{ ?v7 purl:subject ?v8. \} UNION \{ ?v8 skos:subject ?v7. \} \} \} \} \}\end{tabular} \\ \hline
		q1.5 & \begin{tabular}[c]{@{}l@{}}SELECT * WHERE \{\\ ?v1 dbo:wikiPageWikiLink dbr:Category:Cell\_biology.\\ \{ ?v2 foaf:primaryTopic ?v1. \} UNION \{ ?v1 foaf:isPrimaryTopicOf ?v2. \}\\ \{ ?v2 foaf:primaryTopic ?v3. \} UNION \{ ?v3 foaf:isPrimaryTopicOf ?v2. \}\\ ?v3 dbo:wikiPageWikiLink ?v1. OPTIONAL\{ \{ ?v2 foaf:primaryTopic ?v4. \} \\ UNION \{ ?v4 foaf:isPrimaryTopicOf ?v2. \} \}\\ OPTIONAL\{ ?v5 dbo:phylum ?v3. ?v6 dbo:phylum ?v3.\\ OPTIONAL\{ \{ ?v7 foaf:primaryTopic ?v5. \} \\ UNION \{ ?v5 foaf:isPrimaryTopicOf ?v7. \} \} \} \}\end{tabular} \\ \hline
	\end{tabular}
	\label{tab:dbpedia_q1}
\end{table*}

\begin{table*}[h!]
	\centering
	\caption{Queries q2.1-2.6 on LUBM}
	\begin{tabular}{|c|l|}
		\hline
		Mark & \multicolumn{1}{c|}{Query} \\ \hline
		q2.1 & \begin{tabular}[c]{@{}l@{}}SELECT * WHERE \{\\ \{ ?st ub:teachingAssistantOf ?course.\\ OPTIONAL \{ ?st ub:takesCourse ?course2. ?pub1 ub:publicationAuthor ?st. \} \}\\ \{ ?prof ub:teacherOf ?course. ?st ub:advisor ?prof.\\ OPTIONAL \{ ?prof ub:researchInterest ?resint. \\ ?pub2 ub:publicationAuthor ?prof. \} \} \}\end{tabular} \\ \hline
		q2.2 & \begin{tabular}[c]{@{}l@{}}SELECT * WHERE \{\\ \{ ?pub rdf:type ub:Publication. ?pub ub:publicationAuthor ?st. \\ ?pub ub:publicationAuthor ?prof.\\ OPTIONAL \{ ?st ub:emailAddress ?ste. ?st ub:telephone ?sttel. \} \}\\ \{ ?st ub:undergraduateDegreeFrom ?univ. ?dept ub:subOrganizationOf ?univ.\\ OPTIONAL \{ ?head ub:headOf ?dept. ?others ub:worksFor ?dept. \} \}\\ \{ ?st ub:memberOf ?dept. ?prof ub:worksFor ?dept.\\ OPTIONAL \{ ?prof ub:doctoralDegreeFrom ?univ1. \\ ?prof ub:researchInterest ?resint1. \} \} \}\end{tabular} \\ \hline
		q2.3 & \begin{tabular}[c]{@{}l@{}}SELECT * WHERE \{\\ \{ ?pub ub:publicationAuthor ?st. ?pub ub:publicationAuthor ?prof. \\ ?st rdf:type ub:GraduateStudent.\\ OPTIONAL \{ ?st ub:undergraduateDegreeFrom ?univ1. \\ ?st ub:telephone ?sttel. \} \} \{ ?st ub:advisor ?prof.\\ OPTIONAL \{ ?prof ub:doctoralDegreeFrom ?univ. \\ ?prof ub:researchInterest ?resint. \} \}\\ \{ ?st ub:memberOf ?dept. ?prof ub:worksFor ?dept. ?prof a ub:FullProfessor.\\ OPTIONAL \{ ?head ub:headOf ?dept. ?others ub:worksFor ?dept. \} \} \}\end{tabular} \\ \hline
		q2.4 & \begin{tabular}[c]{@{}l@{}}SELECT * WHERE \{\\ ?x ub:worksFor \textless{}http://www.Department0.University0.edu\textgreater{}. \\ ?x a ub:FullProfessor.\\ OPTIONAL \{ ?y ub:advisor ?x. ?x ub:teacherOf ?z. ?y ub:takesCourse ?z. \} \}\end{tabular} \\ \hline
		q2.5 & \begin{tabular}[c]{@{}l@{}}SELECT * WHERE \{\\ ?x ub:worksFor \textless{}http://www.Department0.University12.edu\textgreater{}. \\ ?x a ub:FullProfessor.\\ OPTIONAL \{ ?y ub:advisor ?x. ?x ub:teacherOf ?z. ?y ub:takesCourse ?z. \} \}\end{tabular} \\ \hline
		q2.6 & \begin{tabular}[c]{@{}l@{}}SELECT * WHERE \{\\ ?x ub:worksFor \textless{}http://www.Department0.University12.edu\textgreater{}. \\ ?x a ub:FullProfessor.\\ OPTIONAL \{ ?x ub:emailAddress ?y1. ?x ub:telephone ?y2. ?x ub:name ?y3. \} \}\end{tabular} \\ \hline
	\end{tabular}
	\label{tab:lubm_q3}
\end{table*}

\begin{table*}[h!]
	\centering
	\caption{Queries q2.1-2.6 on DBpedia}
	\begin{tabular}{|c|l|}
		\hline
		Mark & \multicolumn{1}{c|}{Query} \\ \hline
		q2.1 & \begin{tabular}[c]{@{}l@{}}SELECT * WHERE \{\\ \{ ?v6 a dbo:PopulatedPlace. ?v6 dbo:abstract ?v1.\\ ?v6 rdfs:label ?v2. ?v6 geo:lat ?v3. ?v6 geo:long ?v4.\\ OPTIONAL \{ ?v6 foaf:depiction ?v8. \} \}\\ OPTIONAL \{ ?v6 foaf:homepage ?v10. \}\\ OPTIONAL \{ ?v6 dbo:populationTotal ?v12. \}\\ OPTIONAL \{ ?v6 dbo:thumbnail ?v14. \} \}\end{tabular} \\ \hline
		q2.2 & \begin{tabular}[c]{@{}l@{}}SELECT * WHERE \{\\ ?v3 foaf:homepage ?v0. ?v3 a dbo:SoccerPlayer. ?v3 dbp:position ?v6.\\ ?v3 dbp:clubs ?v8. ?v8 dbo:capacity ?v1. ?v3 dbo:birthPlace ?v5.\\ OPTIONAL \{ ?v3 dbo:number ?v9. \} \}\end{tabular} \\ \hline
		q2.3 & \begin{tabular}[c]{@{}l@{}}SELECT * WHERE \{\\ ?v5 dbo:thumbnail ?v4. ?v5 rdf:type dbo:Person.\\ ?v5 rdfs:label ?v. ?v5 foaf:homepage ?v8.\\ OPTIONAL \{ ?v5 foaf:homepage ?v10. \} \}\end{tabular} \\ \hline
		q2.4 & \begin{tabular}[c]{@{}l@{}}SELECT * WHERE \{\\ \{ ?v2 a dbo:Settlement. ?v2 rdfs:label ?v. ?v6 a dbo:Airport. \\ ?v6 dbo:city ?v2. ?v6 dbp:iata ?v5.\\ OPTIONAL \{ ?v6 foaf:homepage ?v7. \} \}\\ OPTIONAL \{ ?v6 dbp:nativename ?v8. \} \}\end{tabular} \\ \hline
		q2.5 & \begin{tabular}[c]{@{}l@{}}SELECT * WHERE \{\\ ?v4 skos:subject ?v. ?v4 foaf:name ?v6.\\ OPTIONAL \{ ?v4 rdfs:comment ?v8. \} \}\end{tabular} \\ \hline
		q2.6 & \begin{tabular}[c]{@{}l@{}}SELECT * WHERE \{\\ ?v0 rdfs:comment ?v1. ?v0 foaf:page ?v. OPTIONAL \{ ?v0 skos:subject ?v6. \}\\ OPTIONAL \{ ?v0 dbp:industry ?v5. \} OPTIONAL \{ ?v0 dbp:location ?v2. \}\\ OPTIONAL \{ ?v0 dbp:locationCountry ?v3. \}\\ OPTIONAL \{ ?v0 dbp:locationCity ?v9. ?a dbp:manufacturer ?v0. \}\\ OPTIONAL \{ ?v0 dbp:products ?v11. ?b dbp:model ?v0. \}\\ OPTIONAL \{ ?v0 georss:point ?v10. \}\\ OPTIONAL \{ ?v0 rdf:type ?v7. \} \}\end{tabular} \\ \hline
	\end{tabular}
	\label{tab:dbpedia_q3}
\end{table*}}

\nop{
\section{Proofs of Theorems}\label{app:proof}
\subsection{Proof of Theorem \ref{thm:union_eq}}
\begin{proof}
By Definition \ref{def:eval} and the definitions of the operators on bags, we have
    \begin{small}
    \begin{equation}
    \begin{split}
        & [\![P_1 \ \texttt{AND} \ (P_2  \ \texttt{UNION} \ P_3)]\!]_D \\
        = \ & [\![P_1]\!]_D \Join [\![P_2  \ \texttt{UNION} \ P_3]\!]_D \\
        = \ & [\![P_1]\!]_D \Join ([\![P_2]\!]_D \ \cup_{bag} \ [\![P_3]\!]_D) \\
        = \ & ([\![P_1]\!]_D \Join [\![P_2]\!]_D) \cup_{bag} ([\![P_1]\!]_D \Join [\![P_3]\!]_D) \\
        = \ & [\![P_1 \ \texttt{AND} \ P_2]\!]_D \cup_{bag} [\![P_1 \ \texttt{AND} \ P_3]\!]_D \\
        = \ & [\![(P_1 \ \texttt{AND} \ P_2) \ \texttt{UNION} \ (P_1 \ \texttt{AND} \ P_3)]\!]_D. \nonumber
    \end{split}
    \end{equation}
    \end{small}
\end{proof}

\subsection{Proof of Theorem \ref{thm:optional_eq}}
\begin{proof}
Similarly, we have
    \begin{small}
    \begin{equation}
    \begin{split}
        & [\![P_1 \ \texttt{OPTIONAL} \ (P_1 \ \texttt{AND} \ P_2)]\!]_D \\
        = \ & ([\![P_1]\!]_D \Join [\![P_1 \ \texttt{AND} \ P_2]\!]_D) \cup_{bag} ([\![P_1]\!]_D \setminus [\![P_1 \ \texttt{AND} \ P_2]\!]_D) \\
        = \ & ([\![P_1]\!]_D \Join ([\![P_1]\!]_D \Join [\![P_2]\!]_D)) \\ 
        & \cup_{bag} ([\![P_1]\!]_D \setminus ([\![P_1]\!]_D \Join [\![P_2]\!]_D)) \\
        = \ & ([\![P_1]\!]_D \Join [\![P_2]\!]_D) \cup_{bag} ([\![P_1]\!]_D \setminus [\![P_2]\!]_D) \\
        = \ & [\![P_1 \ \texttt{OPTIONAL} \ P_2]\!]_D. \nonumber
    \end{split}
    \end{equation}
    \end{small}
\end{proof}
}


\end{sloppypar}
\end{document}